\definecolor{Brown}{cmyk}{0, 0.8, 1, 0.6}
\definecolor{Yellow}{rgb}{1, 1, 0}
\definecolor{Light}{gray}{.80}
\definecolor{Dark}{gray}{.20}
\newenvironment{citemize}{
\begin{list}{\labelitemi}{\leftmargin=1.5em}
  \setlength{\itemsep}{1.5pt}
  \setlength{\parskip}{0pt}
  \setlength{\parsep}{0pt}}{\end{list}
}
\DeclareMathAlphabet{\mathpzc}{OT1}{pzc}{m}{it}
\newcommand{\hilbert}{\mathcal{H}}
\newcommand{\Real}{\mathbb{R}}
\newcommand{\K}{\mathcal{K}}
\newcommand{\dims}{\mathpzc{D}}
\newcommand{\kd}{$kd$}
\newcommand{\ip}[2]{\ensuremath{\langle #1, #2 \rangle}}
\newcommand{\norm}[1]{\ensuremath{\left\| #1 \right\|}}
\newtheorem{theorem}{Theorem}[section]
\newtheorem{lemma}{Lemma}[section]
\newtheorem{definition}{Definition}[section]
\newcommand{\mynegvspace}{\vspace{-0.05 in}}
\newcommand{\mynegvspacex}{\vspace{-0.25 in}}
\begin{document}

\title{Fast Exact Max-kernel Search}
\author{Ryan Curtin \\
\and 
Parikshit Ram \\
\and 
Alexander G. Gray}
\date{}

\maketitle

\begin{abstract}
The wide applicability of kernels makes the problem of max-kernel search ubiquitous and more general than the usual similarity search in metric spaces. We focus on solving this problem efficiently. We begin by characterizing the inherent hardness of the max-kernel search problem with a novel notion of {\em directional concentration}. Following that, we present a method to use an $O(n \log n)$ algorithm to index any set of objects (points in $\Real^\dims$ or abstract objects) \textit{directly in the Hilbert space} without any explicit feature representations of the objects in this space. We present the first provably $O(\log n)$ algorithm for exact max-kernel search using this index. Empirical results for a variety of data sets as well as abstract objects demonstrate up to 4 orders of magnitude speedup in some cases. Extensions for approximate max-kernel search are also presented.

\end{abstract}
\section{Introduction} \label{sec:intro}
\noindent
\textbf{Max-kernel search.} In this paper, we present a novel algorithm to accelerate the following general task of max-kernel search (MKS): for a given set $S$ of $n$ objects (the \textit{reference set}), a Mercer kernel function $\K(\cdot, \cdot)$, and a query $q$, find the object $p \in S$ such that:
\begin{equation}\label{eq:mko}
p = \arg \max_{r \in S} \K(q, r).
\end{equation}
This general form of problem is ubiquitous in computer science; it can be easily seen as a similarity search (for some similarity function $\K(\cdot, \cdot)$).  The most simple approach to this general problem is a linear scan over all the objects in $S$. However, the computational cost of this approach scales linearly with the size of the reference set ($|S| = n$) for a single query, making it computationally prohibitive for large data sets.

\begin{figure}[ht]
\mynegvspace
\begin{center}
\includegraphics[width=0.75\columnwidth,clip=true,trim= 1.5in 1.0in 1.5in 0.5in]{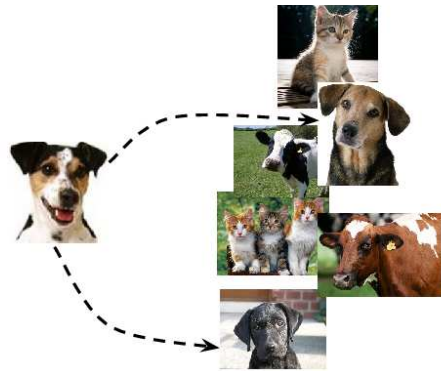}
\end{center}
\mynegvspacex
\mynegvspace
\mynegvspace
\mynegvspace
\caption{Matching images -- an example of MKS with the linear, cosine, Gaussian and polynomial kernels as the usual similarity functions.}
\label{fig:mko-ex1}
\mynegvspace
\mynegvspace
\mynegvspace
\end{figure}

A Mercer kernel can define a measure of similarity for classes of objects including points in $\Real^\dims$, and extending to objects which do not have natural fixed-length representations. Kernels are ubiquitous and can be devised for any new class of objects, such as images and documents (which can be considered as points in $\Real^\dims$), to more abstract objects  like strings (protein sequences \cite{leslie2002spectrum}, text), graphs (molecules \cite{borgwardt2005protein}, brain neuron activation paths), and time series (music, financial data) \cite{muller1997predicting}. The beauty of kernels is the renowned ``kernel trick'' -- the ability to evaluate similarity (inner products) between any pair of objects in some feature space without requiring the explicit feature representations of those objects. 

A special case is the problem of nearest-neighbor search (NNS) in metric spaces, in which the closest object to the query with respect to a distance metric is sought.  However, the requirement of a distance metric makes many efficient methods for exact and approximate NNS~\cite{clarkson2006nearest} unapplicable to the general MKS.

\vspace{0.05in}
\noindent
\textbf{Some large-scale applications.} The widely studied problem of image matching in computer vision is a MKS. The sets of images are of the order of millions and still growing, making linear scan computationally prohibitive. MKS also appears in maximum-a-posteriori inference~\cite{klaas2005fast}, as well as collaborative filtering (via the widely successful matrix factorization framework)~\cite{koren2009matrix}. This matrix factorization obtains an accurate representation of the data in terms of user vectors and item vectors, and the desired result -- the user preference of an item -- is the inner-product between the two respective vectors (a Mercer kernel).  With ever-scaling item sets and millions of users \cite{dror2011yahoo}, reducing the cost of retrieval of recommendations (which is a MKS) is critical for real-world systems. Finding similar protein/DNA sequences for a query sequence from a large set of biological sequences is also an instance of MKS with biological sequences as the objects and various domain-specific kernels (for example, the $p$-spectrum kernel~\cite{leslie2002spectrum}, maximal segment match score \cite{altschul1990basic} \& Smith-Waterman alignment score \cite{smith1981identification}\footnote{The last two functions provide matching scores for pairs of sequences and can easily be shown to be Mercer kernels.}). An efficient MKS algorithm can be directly applied to biological (sub)sequence matching.

\vspace{0.05in}
\noindent
\textbf{Contributions.} To address the breadth of applications of the MKS problem, we present a method to accelerate max-kernel search {\em for any class of objects with a corresponding Mercer kernel}.  Our contributions are:
\begin{citemize}
\item The first concept for characterizing the hardness of MKS in terms of the concentration of the kernel values in any direction -- the {\em directional concentration}.
\item An $O(n \log n)$ algorithm to index any set of objects {\em directly in the Hilbert space defined by the kernel} without requiring explicit representations of the objects in this space.
\item A novel branch-and-bound algorithm on the index in the Hilbert space, which can achieve orders of magnitude speedups over linear search.
\item Value-approximate and order-approximate extensions to the exact max-kernel search algorithm.
\item The first $O(\log n)$ runtime bound for {\em exact} MKS with our proposed algorithm for {\em any} Mercer kernel.
\end{citemize}

\subsection{Related work}
Although there are existing techniques for MKS, \textit{almost all of them solve the approximate problem under restricted settings}. The most common assumption is that the objects are points in some metric space and the kernel function is {\em shift-invariant} -- a monotonic function of the distance between the two objects $(\K(p, q) = f(\norm{p - q}))$, such as the Gaussian radial basis function (RBF) kernel. For shift-invariant kernels, a tree-based recursive algorithm has been shown to scale to large sets for maximum-a-posteriori inference \cite{klaas2005fast}. However, a shift-invariant kernel is only applicable to objects already represented in some metric space.  In fact, the MKS with a shift-invariant kernel is equivalent to NNS in the input space itself, and can be solved directly using existing methods for NNS, an easier and better-studied problem. For shift-invariant kernels, the points can be explicitly embedded in some low-dimensional metric space such that the inner product between the representations of any two points approximates their corresponding kernel value \cite{rahimi2007random}. This step takes $O(n \dims^2)$ time for $S \subset \Real^\dims$ and can be followed by NNS on these representations to accomplish MKS.

Locality-sensitive hashing (LSH) \cite{gionis1999similarity} is widely used for image matching, but only with explicitly representable kernel functions which admit a locality sensitive hashing function \cite{charikar2002similarity}\footnote{The Gaussian and the cosine kernel admit locality sensitive hashing functions with some modifications.}. Kulis and Grauman \cite{kulis2009kernelized} apply LSH to solve MKS {\em approximately} for normalized kernels without any explicit representation. Normalized kernels restrict the self-similarity value to a constant $(\K(x,x) = \K(y,y)\ \forall\ x,y \in S)$. The preprocessing time for this LSH is $O(p^3)$ and a single query requires $O(p)$ kernel evaluations. Here $p$ controls the accuracy of the algorithm -- larger $p$ implies better approximation; the suggest value for $p$  is $O(\sqrt{n})$ with no rigorous approximation guarantees.

A recent work \cite{ram2012maximum} proposed the first technique for exact MKS using a tree-based branch-and-bound algorithm but is restricted only to linear kernels and the algorithm does not have any runtime guarantees. The authors suggest a method for extending their algorithm to non-representable spaces with general Mercer kernels but require $O(n^2)$ preprocessing time.

\vspace{0.05in}
\noindent
\textbf{A case for un-normalized kernels.} While some of the kernels used in machine learning (for example, the Gaussian and the cosine kernel) are normalized, some common kernels like the linear kernel (and the polynomial kernels) are not normalized. Many applications require un-normalized kernels: (1) In the retrieval of recommendations, the normalized linear kernel will result in inaccurate user-item preference scores. (2) In biological sequence matching with the domain-specific matching functions, $\K(x,x)$ implicitly corresponds to presence of (genetically) valuable letters (like W, H, P) or invaluable letters (like X)\footnote{See the score matrix for letter pairs in protein sequences at http://www.ncbi.nlm.nih.gov/Class/FieldGuide/BLOSUM62.txt.} in the sequence $x$. This crucial information is lost in kernel normalization. In this paper, we consider general un-normalized kernels as well as the special case of normalized kernels.

None of the existing techniques can be directly applied to every instance of MKS with general Mercer kernels and any class of objects (Equation~\ref{eq:mko}). Moreover, almost all existing techniques resort to approximate solutions. In this paper, we present a method to perform the exact max-kernel search with $O(n \log n)$ preprocessing time and $O(\log n)$ query time, as well as approximate methods with rigorous accuracy guarantees.

\vspace{0.05in}
\noindent
\textbf{This paper.} In the following section, we contrast MKS to NNS, investigate the inherent hardness of MKS and motivate the applicability of indexing schemes used for NNS in MKS. Section~\ref{sec:ctree} discusses the construction of a tree in the kernel space without explicit representations of objects. Section~\ref{sec:bb_algo} presents the novel branch-and-bound algorithm for exact MKS and extends it to approximate MKS. Sections~\ref{sec:analysis}~and~\ref{sec:eval} examine the theoretical and empirical performance of the approach, respectively. Section~\ref{sec:par} extends the proposed method to the distributed setting and we conclude with Section~\ref{sec:conclusion}.

\section{Max-kernel search} \label{sec:mko}
A Mercer kernel implies that the kernel value for a pair of objects $(x,y)$ corresponds to an inner-product between the vector representation of the objects $\varphi(x), \varphi(y)$ in some inner-product space $\hilbert$ $(\K(x, y) = \ip{\varphi(x)}{\varphi(y)}_\hilbert)$. Hence every Mercer kernel induces the following metric in $\hilbert$:
\begin{equation} \label{eq:induced-metric}
\begin{split}
d_\K(x, y)  &=  \norm{\varphi(x) - \varphi(y)} \\
 &= \sqrt{\K(x,x) + \K(y,y) - 2\K(x,y)}.
\end{split}
\end{equation}
Whenever MKS can be reduced to NNS in $\hilbert$, NNS methods for general metrics \cite{clarkson2006nearest} can be used for efficient MKS. In this section, we show that this reduction is possible only under strict conditions.

Subsequently, we discuss the hardness of MKS and contrast it to the hardness of NNS. Finally, we discuss desirable properties of certain NNS techniques and their applicability to MKS.

\vspace{0.05in}
\noindent
\textbf{Possible reductions and conditions.}
The nearest neighbor for a query $q$ in $\hilbert$ $\left(\arg \min_{r\in S} d_\K(q,r)\right)$ is the max-kernel candidate (Eq.~\eqref{eq:mko}) if $\K(\cdot, \cdot)$ is a normalized kernel. The two problems can have very different answers with un-normalized kernels. Every kernel also induces a cosine similarity $\left(\K(x,y)/\sqrt{\K(x,x)\K(y,y)}\right)$ in $\hilbert$. Similar to the previous case, the object with the maximum cosine-similarity in $\hilbert$ is the max-kernel candidate only for normalized kernels.

\noindent
\textbf{Hardness of nearest-neighbor search.} However, even if MKS can be reduced to NNS, NNS is still hard for general metrics ($\Omega(n)$ for a single query) without any assumption on the structure of the metric and the set $(S, d)$. Here we present one notion of characterizing the hardness in terms of the structure of the metric \cite{karger2002finding}:
\begin{definition} \label{def:int_dim}
Let $B_S(p,\Delta) = \{r\in S \colon d(p,r) \leq \Delta \}$ denote a closed ball of radius $\Delta$ around some $p\in S$ with respect to a metric $d$. Then, the \textbf{expansion constant} of $(S, d)$ is defined as the smallest $c\geq 2$ such $\left|B_S(p,2\Delta)\right| \leq c\left|B_S(p,\Delta)\right|~\forall p\in S$ and $\forall \Delta>0$.
\end{definition}
The expansion constant effectively bounds the number of points that could be sitting on the surface of a hyper-sphere of any radius around any point. If $c$ is high, NNS could be expensive. A value of $c \sim \Omega(n)$ implies that the search cannot be better than linear scan asymptotically. Under the assumption of bounded expansion constant, NNS methods with sub-linear/logarithmic theoretical runtime guarantees have been presented \cite{karger2002finding, beygelzimer2006cover}.
\subsection{Characterizing the hardness of MKS}
The kernel values for a query are proportional to the length of the projections in the direction of the query in $\hilbert$. While the hardness of NNS depends on how concentrated the surface of spheres are (as characterized by the expansion constant), the hardness of MKS should depend on the distribution of the projections in the direction of the query. This distribution can be characterized using the distribution of points in terms of distances. \textit{If two points are close in distance, then their projections in any direction are close as well. However, if two points have close projections in a direction, it is not necessary that the points themselves are close to each other.} It is to characterize this reverse relationship between points (closeness in projections to closeness in distances) that we present a new notion of concentration in any direction:
\begin{definition} \label{def:direction-conc-constant}
Let $\K(x,y) = \ip{\varphi(x)}{\varphi(y)}_\hilbert$ be a Mercer kernel, $d_\K(x, y)$ be the induced metric from $\K$ and $B_S(p,\Delta)$ denote the corresponding closed ball of radius $\Delta$ around a point $p$ in $\hilbert$. Let $I_S(v, [a, b]) = \{r \in S \colon \ip{v}{\varphi(r)}_\hilbert \in [a,b] \}$ be the set of objects in $S$ projected onto a direction $v$ in $\hilbert$ lying in the interval $[a,b]$ along $v$. Then, the \textbf{directional concentration constant} of $(S, \K)$ is defined as the smallest $\gamma \geq 1$ such that $\forall u \in \hilbert$ such that $\norm{u}_\hilbert = 1$, $\forall p\in S$ and $\forall \Delta > 0$, the set $I_S(u, [\ip{u}{\varphi(p)}_\hilbert - \Delta, \ip{u}{\varphi(p)}_\hilbert + \Delta])$ can be covered by at most $\gamma$ balls of radius $\Delta$.
\end{definition}
\begin{figure*}[t]
\mynegvspace
\begin{center}
\subfigure[Projection interval set.]{ \label{fig:proj-int}
\includegraphics[width=0.31\textwidth,clip=true,trim= 2.0in 6.3in 2.0in 1.7in]{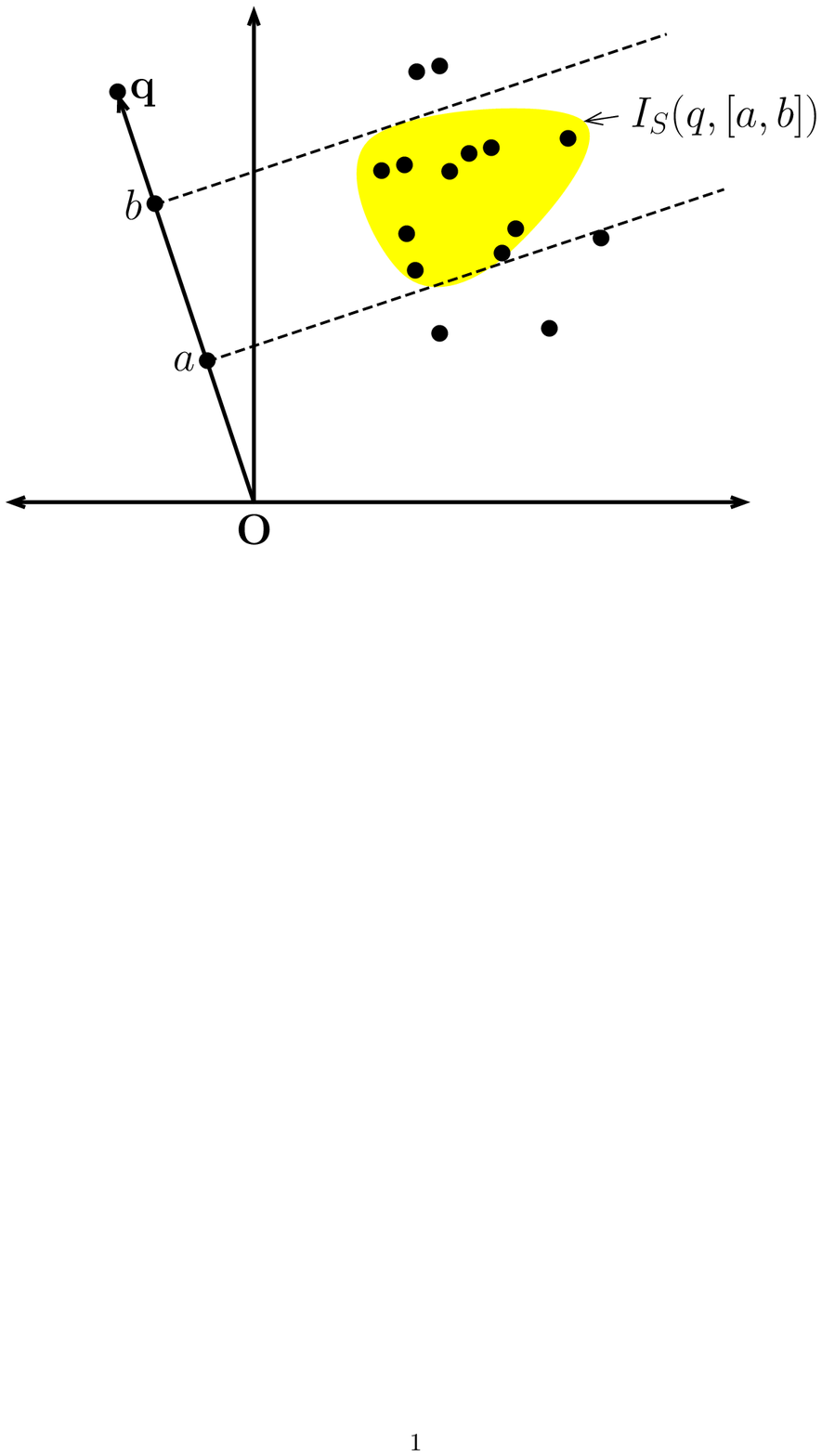}
}
\subfigure[Low value of $\gamma$.]{ \label{fig:low-dcc}
\includegraphics[width=0.31\textwidth,clip=true,trim= 1.8in 6.8in 1.9in 1.6in]{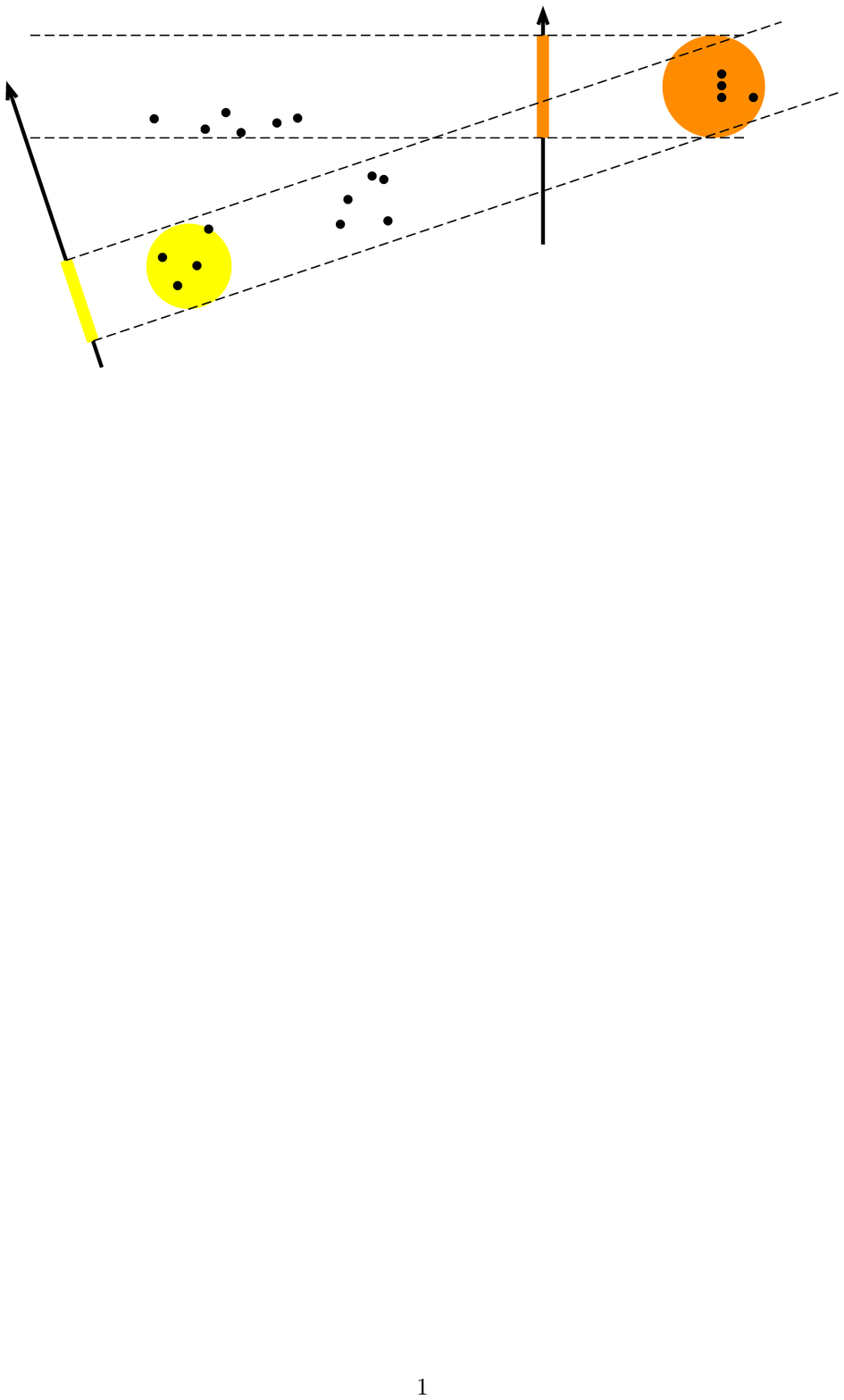}
}
\subfigure[High value of $\gamma$.]{ \label{fig:high-dcc}
\includegraphics[width=0.31\textwidth,clip=true,trim= 1.5in 6.5in 1.6in 2.0in]{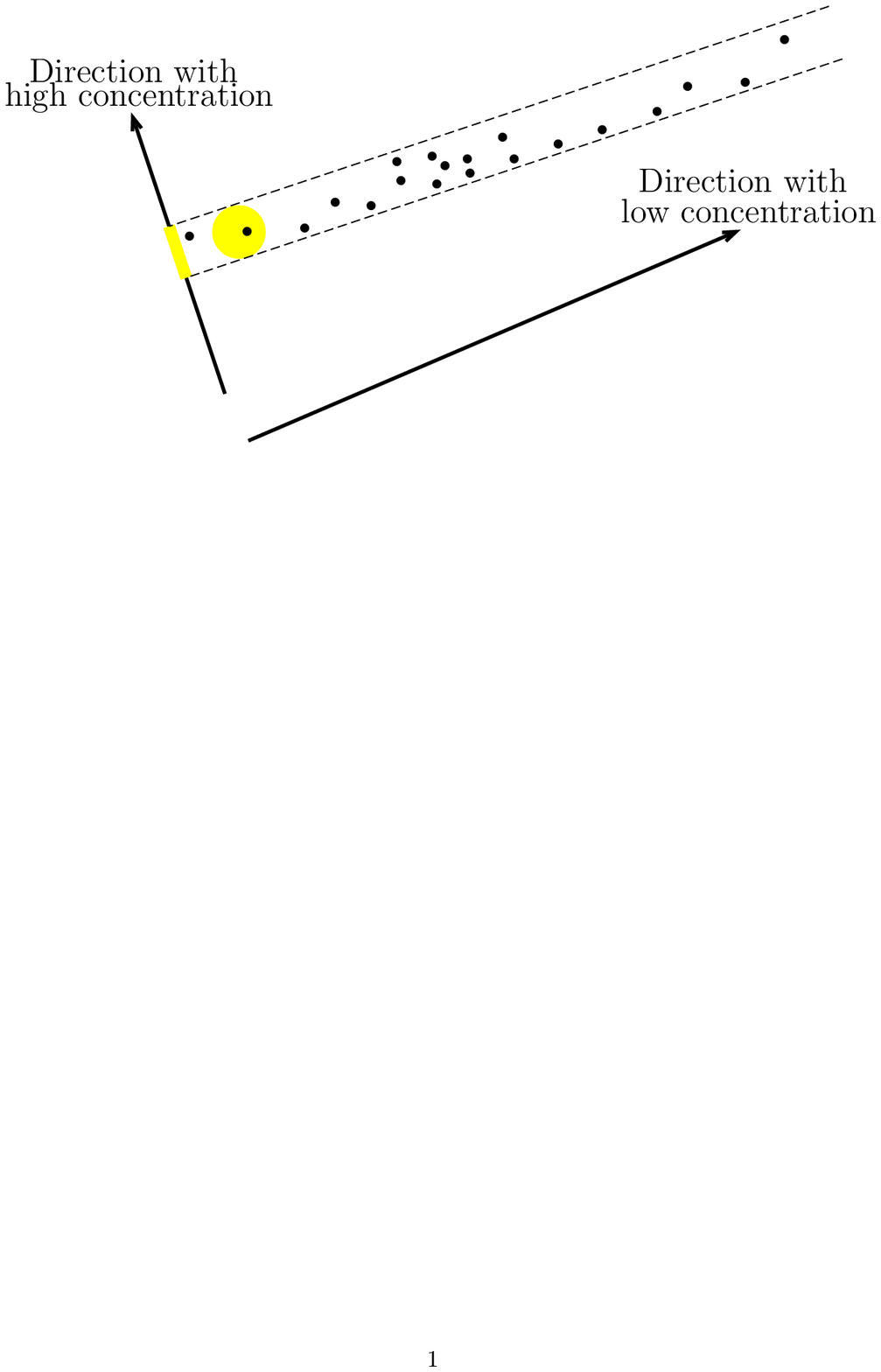}
}
\end{center}
\label{fig:proj-conc}
\mynegvspacex
\caption{Concentration of projections.}
\mynegvspace
\mynegvspace
\end{figure*}
The directional concentration constant is not a measure of the number of points projected into a small interval, but rather a measure of the number of ``patches'' of the data in a particular direction. For a set of points to be close in projections, there are at most $\gamma$ subsets of points which are close in distances as well. Consider the set of points $B = I_S(q, [a, b])$ projected into an interval in some direction (Figure \ref{fig:proj-int}). A high value of $\gamma$ implies that the number of points in $B$ is high but the points are spread out and the number of balls (with diameter equal to $|b-a|$) required to cover all these points is high as well (each point possibly requiring an individual ball). Figure \ref{fig:high-dcc} provides one such example. A low value of $\gamma$ implies that either $B$ has a small size or the size of $B$ is high and $B$ can be covered with a small number of balls (of diameter $|b-a|$).  Figure \ref{fig:low-dcc} is an example of a set with low $\gamma$. The directional concentration constant bounds the number of balls of a particular radius required to index points that project onto an interval of size twice the radius.
\subsection{Desirable existing techniques}
The notion of bounded directional concentration implies that indexing schemes capable of efficiently indexing points in a particular direction might be useful for the task of MKS.
Indexing schemes like space-partitioning trees have been widely used for NNS with success in many cases.  Trees offer good hierarchical indexing schemes in low to medium dimensions; for high-dimensional data, trees which exploit some low-dimensional structure have been developed \cite{dasgupta2008random, beygelzimer2006cover}.  These hierarchical indexing schemes lend easily to intuitive branch-and-bound algorithms for efficient solutions (especially approximate solutions \cite{ciaccia2000pac, ram2009rank}). In addition, tree-based branch-and-bound algorithms are essentially incremental algorithms, and thus easily extend to anytime algorithms \cite{ram2012nearest}. Most importantly, trees only require a single construction -- different algorithms can work on the same tree; in addition, once a tree is built, point insertions and deletions are generally cheap.
%
%

\vspace{0.05in}
\noindent
\textbf{Which tree to use for MKS?} Given the numerous advantages of trees, we wish to select an appropriate tree for efficient MKS. The following two properties are desired of any indexing scheme used for MKS:
\mynegvspace
\begin{citemize}
\item Explicit representation of the objects is not required.
\item It should have sub-quadratic construction time.
\end{citemize}
\mynegvspace
The \kd-tree \cite{friedman1977algorithm} and the metric tree
\cite{preparata1985computational} exhibit good characteristics and are widely used in NNS. However, the \kd-tree requires the explicit representation of the points $\varphi(x)$ in $\hilbert$ for its axis-aligned splits. For similar reasons, random-projection trees~\cite{dasgupta2008random} and PCA-trees \cite{mcnames2001fast} cannot be used for MKS. Metric trees can be constructed without the explicit representations since they can work with only the ability to evaluate the induced metric $d_\K$~\cite{moore2000anchors}\footnote{Calculation of the explicit mean $\mu$ of a node is avoidable by using the kernel trick for operations on the mean: $\left(\ip{\mu}{\varphi(q)} = \sum_{x \in T} \K(x, q) / |T| \right)$ (where $T$ is the set of points in the node). However, this makes the tree construction and tree search computational prohibitive for efficient MKS.}. In this paper, we consider the recently formulated {\em cover tree}~\cite{beygelzimer2006cover} for MKS. It provides a systematic way to build a tree without explicit object representations and with a sub-quadratic construction time. A key difference between cover trees and the aforementioned trees is that the \kd-tree and the metric tree are binary trees while the cover trees can have multiple number of children. Moreover, the time complexities of building and querying a cover trees have been analyzed extensively \cite{beygelzimer2006cover, ram2009linear}, whereas \kd-trees and metric trees have been analyzed only under very limited settings \cite{friedman1977algorithm}. 
\section{Indexing in kernel space \\ with cover trees} \label{sec:ctree}
A cover tree stores a dataset $S$ of size $n$ in the form of a levelled tree. The structure has $O(n)$ space requirement (Theorem 1 in \cite{beygelzimer2006cover}). Each level is a ``cover'' for the level beneath it and is indexed by an integer scale $i$ which decreases as the tree is descended. Let $C_i$ denote the set of nodes at scale $i$. For all scales $i$, the following invariants hold: (i) (nesting invariant) $C_i \subset C_{i - 1}$, (ii) (covering tree invariant)  $\forall p \in C_{i - 1}, \exists q \in C_i \colon d(p, q) \leq 2^i$, and exactly one such $q$ is a parent of $p$. (iii) (separation invariant) For all $p, q \in C_i$, $d(p, q) > 2^i$.

Although the cover tree is defined as infinite levelled sets, the tree has a precise finite representation. As explained in \cite{ram2009linear}, ``The \emph{implicit representation} consists of infinitely many levels $C_i$ with the level $C_\infty$ containing a single node which is the root and the level $C_{-\infty}$ containing every point in the dataset as a node. The \emph{explicit representation} is required to store the tree in $O(n)$ space. It coalesces all nodes in the tree for which the only child is the self-child. This implies that every explicit node either has a parent other than the self-parent or has a child other than a self-child.''

\vspace{0.05in}
\noindent
\textbf{Tree construction in the kernel space $\hilbert$.}  Every node in a cover tree is associated with a single point $p$. Hence, the cover tree construction only requires distances between the points in the set (the tree construction algorithm is presented in Section A of the supplement); and therefore construction in $\hilbert$ \textit{does not require any explicit representation in} $\hilbert$. From Equation~\ref{eq:induced-metric}, the pairwise distance $d_\K(x, y)$ can be evaluated with only three kernel evaluations. If the self-kernel values $\K(x, x)\ \forall\ x \in S$ are precomputed and saved, $d_\K(x, y)$ only requires one evaluation of $\K(x, y)$. The tree construction time is bounded by the following theorem:
\vspace{0.01in}
\begin{theorem} \label{thm:construction-time}
(Theorem 3.6 in \cite{beygelzimer2006coverLonger}) For a data set $S$ of $n$ objects and a metric $d_\K$ with expansion constant $c$, the tree construction requires at most $O(c^6 n \log n)$ time.
\end{theorem}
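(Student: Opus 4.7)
The plan is to follow the sequential-insertion strategy of \cite{beygelzimer2006coverLonger}: build the cover tree by inserting the $n$ objects one at a time into an initially empty tree, and show that each insertion costs $O(c^6 \log n)$, so the total construction time is $O(c^6 n \log n)$. Because $d_\K(x,y)$ can be evaluated with a single kernel evaluation after caching all self-kernel values $\K(x,x)$, each distance query is $O(1)$ and we may count work in units of distance evaluations.

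For the insertion of a single point $p$, the algorithm descends the tree scale by scale, maintaining at scale $i$ a \emph{cover set} $Q_i$ of explicit nodes whose distance to $p$ is at most $2^{i+1}$. The work at level $i$ is proportional to $|Q_i|$ times the maximum branching factor encountered, so I need to bound both. The branching factor at an explicit node is $O(c^{O(1)})$ by a packing argument: the explicit children sit in a ball of radius $2^i$ around their parent while being pairwise separated by more than $2^{i-1}$ (from the separation invariant), and iterating the doubling property $|B_S(p, 2\Delta)| \leq c\, |B_S(p,\Delta)|$ a constant number of times yields the bound. Exactly the same argument, applied around $p$ instead of around a parent, bounds $|Q_i|$ by $c^{O(1)}$, since the elements of $Q_i$ lie in $C_i$ and are therefore $> 2^i$-separated while confined to a ball of radius $O(2^i)$ around $p$.

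The number of scales the insertion must touch is controlled by the explicit representation: implicit levels at which a node's only child is its self-child are coalesced, so the insertion path in the explicit tree has length $O(\log n)$, which follows by combining the covering invariant (radii halve with each scale) with the separation invariant and another application of the expansion constant. Putting the $O(c^6)$ per-level cost together with the $O(\log n)$ explicit depth gives the per-insertion cost $O(c^6 \log n)$, and summing over all $n$ insertions yields the theorem.

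The main obstacle is the packing step that converts the global doubling property in Definition \ref{def:int_dim} into polynomial-in-$c$ bounds on $|Q_i|$ and on the branching factor at a fixed scale. The subtlety is that the expansion constant is an axiom about balls in $(S, d_\K)$ with respect to the \emph{full} data, whereas the cover sets and children live at a specific scale $i$ and must be related to a ball of radius $O(2^i)$ containing them; this is handled by an induction over scale in which each halving of the radius pays one factor of $c$, and carefully tracking how many halvings are needed to go from a radius-$2^{i+1}$ ball down to the separation radius yields the exponent $6$. The remaining bookkeeping—verifying that the invariants are preserved by insertion, handling the interleaving of explicit and implicit nodes, and confirming that the insertion depth in the explicit representation is genuinely $O(\log n)$ rather than only $O(\log n)$ times the aspect ratio—all follows the original construction analysis in \cite{beygelzimer2006coverLonger} and transfers verbatim to the metric $d_\K$, which is the only property of the ambient space our algorithm uses.
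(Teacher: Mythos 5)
The paper does not actually prove this theorem: it is imported verbatim as Theorem~3.6 of \cite{beygelzimer2006coverLonger}, and the only content the paper itself supplies is the observation (in the text around the statement and in Appendix~A) that cover tree construction touches the data only through pairwise distances, so the construction runs unchanged on the induced metric $d_\K$, each evaluation of which costs one kernel evaluation once the self-kernels $\K(x,x)$ are cached. You correctly include that reduction, and your sketch of the cited analysis---sequential insertion, cover sets and branching factors bounded by packing arguments against the expansion constant, and the explicit depth controlling the number of scales touched---is the standard route to the bound.

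However, your accounting of the powers of $c$ does not close. The explicit depth of a point is $O(c^2 \log n)$---this is precisely Lemma~\ref{lma:db} of this paper, quoted from the same reference---not $O(\log n)$ as you assert; coalescing the self-child-only levels does not remove the $c^2$, because the number of surviving explicit levels on an insertion path is itself controlled by repeated applications of the doubling property. In the reference, the $O(c^6\log n)$ per-insertion cost arises essentially as the product of the explicit depth $O(c^2\log n)$ with $O(c^4)$ work per level (the width bound of Lemma~\ref{lma:wb} on the number of children). Pairing your claimed $O(c^6)$ per-level cost with the correct $O(c^2\log n)$ depth would give $O(c^8 n\log n)$ overall, overshooting the stated bound; you land on the right exponent only because of the incorrect depth claim. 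To repair the argument, tighten the per-level work to $O(c^4)$ (cover-set size times branching, with the cover-set size absorbed into the constant-power packing bound) and use the correct $O(c^2\log n)$ depth, or simply defer to the cited theorem as the paper does and confine your contribution to the kernel-trick reduction.
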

\noindent
\textbf{Remark.} We would like to point out that while we consider the cover
tree to implicitly index points in $\hilbert$ without any modification, the
cover tree has only been used for NNS to this date. The novelty of this paper is a new branch-and-bound {\em algorithm} using this tree (Section \ref{sec:bb_algo}) to solve the general task of MKS with provable theoretical guarantees (Section \ref{sec:analysis}) and supporting empirical evidence (Section \ref{sec:eval}).

\section{Simple branch-and-bound \\algorithm} \label{sec:bb_algo}
In this section, we present a simple branch-and-bound algorithm on the cover tree for MKS. Branch-and-bound is widely used in NNS with the help of the triangle inequality of the distance metric. In the absence of the triangle inequality for kernel functions, we obtain a novel bound on the max-kernel value possible between a query and any subtree of a cover tree. Then we present the algorithm for exact and approximate search.

\vspace{0.05in}
\noindent
\textbf{Bounding the max-kernel value.} A cover tree node is defined by an object $p$ and a level $i$. 
Let $S_p^i$ denote the set of objects in the subtree rooted at a node defined by object $p$ at level $i$. In the following theorem, we bound the maximum possible kernel value between a query and an object in the subtree of the cover tree. For notational convenience, we will denote $\max_{r \in R} \K(q, r)$ as $\K(q, R)$.
\begin{theorem} \label{thm:single-ball-bnd}
\mynegvspace
Given a cover tree node rooted at an object $p$ at level $i$ in the kernel space $\hilbert$ and a (query) object $q$, the maximum kernel function value between $q$ and any object in the set $S_p^i$ is bounded as:
\mynegvspace
\begin{equation} \label{eq:mko-ball-bnd}
\K(q, S_p^i) \leq \K(q, p) + 2^{i+1} \sqrt{\K(q, q)}.
\mynegvspace
\end{equation}
\mynegvspace
\mynegvspace
\mynegvspace
\mynegvspace
\end{theorem}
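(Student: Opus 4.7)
The plan is to combine two ingredients: a geometric diameter bound that follows from the cover tree invariants, and the Cauchy--Schwarz inequality in the Hilbert space $\hilbert$. The upshot is that any descendant of $p$ at level $i$ is close to $p$ in the induced metric $d_\K$, and therefore its image in $\hilbert$ has a bounded-length projection onto the direction of $\varphi(q)$.

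First I would establish the key geometric fact: for every $r \in S_p^i$,
\[
d_\K(p, r) \;\leq\; 2^{i+1}.
\]
This is a standard consequence of the covering invariant of the cover tree. Descending from $p$ at level $i$ to $r$, one passes through a sequence of nodes $p = q_i, q_{i-1}, q_{i-2}, \ldots$ with each $q_{j-1}$ a child of $q_j$, so $d_\K(q_j, q_{j-1}) \leq 2^{j}$. Applying the triangle inequality (which $d_\K$ satisfies since it is the metric induced by the Hilbert inner product) and summing the geometric series $\sum_{j \leq i} 2^j = 2^{i+1}$ yields the claim.

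Next I would translate this distance bound into a bound on the kernel value. Writing $\varphi(r) = \varphi(p) + (\varphi(r) - \varphi(p))$ and using bilinearity of the inner product,
\[
\K(q, r) \;=\; \ip{\varphi(q)}{\varphi(p)}_\hilbert + \ip{\varphi(q)}{\varphi(r) - \varphi(p)}_\hilbert.
\]
The first term is just $\K(q, p)$. For the second term, Cauchy--Schwarz gives
\[
\ip{\varphi(q)}{\varphi(r) - \varphi(p)}_\hilbert \;\leq\; \sqrt{\K(q,q)}\, d_\K(p,r) \;\leq\; 2^{i+1}\sqrt{\K(q,q)},
\]
where the last step uses the diameter bound above. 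Taking the maximum over $r \in S_p^i$ yields the stated bound on $\K(q, S_p^i)$.

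There is no real obstacle here; the only thing to be careful about is the cover tree diameter argument, specifically invoking the covering invariant inductively down the possibly infinite implicit representation and checking that the sum telescopes to $2^{i+1}$ rather than $2^i$ (an off-by-one that would weaken the bound). Everything else is a one-line application of Cauchy--Schwarz, which is the natural substitute for the triangle inequality that branch-and-bound algorithms rely on in the NNS setting.
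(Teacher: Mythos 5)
Your proposal is correct and follows essentially the same route as the paper: decompose $\varphi(r)$ as $\varphi(p)$ plus a displacement of norm $d_\K(p,r)$, apply Cauchy--Schwarz to the cross term, and bound the displacement by the telescoping sum $\sum_{j\leq i}2^j = 2^{i+1}$ from the covering invariant. The only cosmetic difference is that the paper writes the displacement as $\Delta\cdot\vec{u}$ for a unit vector $\vec{u}$ and bounds $\Delta$ at the end, whereas you establish the diameter bound first; the substance is identical.
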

\begin{proof}
Suppose that $p^*$ is the best possible match in the set $S_p^i$ for $q$ and let
$\vec{u}$ be a unit vector in the direction of the line joining $\varphi(p)$ to
$\varphi(p^*)$ in $\hilbert$. Then $\varphi(p^*) = \varphi(p) + \Delta \cdot
\vec{u}$ where $\Delta = d_\K(p, p^*)$ is the distance $\varphi(p)$ and the best
possible match $\varphi(p^*)$ (see Figure \ref{fig:pt-ball-bnd}). Then we have
the following:
\begin{eqnarray}
\mynegvspace
\K(q, S_p^i) & = & \K(q, p^*) = \ip{\varphi(q)}{\varphi(p^*)}_\hilbert  \nonumber \\
    & = & \ip{\varphi(q)}{\varphi(p) + \Delta \cdot \vec{u}}_\hilbert \nonumber \\
\label{eq:first-mko-ineq} & \leq & \ip{\varphi(q)}{\varphi(p)}_\hilbert + \Delta \norm{\varphi(q)}_\hilbert,
\mynegvspace
\end{eqnarray}
where the last inequality follows from the Cauchy-Schwartz inequality $(\ip{x}{y} \leq \norm{x}\norm{y})$ and the fact that $\norm{\vec{u}} = 1$. From the definition of the kernel function, Equation \ref{eq:first-mko-ineq} gives us  $\K(q, S_p^i) \leq \K(q, p) + \Delta \sqrt{\K(q, q)}.$
We bound $\Delta$ from above using the covering invariant -- for any cover tree node $p$ at level $i$, the distance to the farthest child node is bounded by $2^i$. Applying this bound recursively with the triangle inequality of $d_\K$ gives us $\Delta = d_\K(p, p^*) \leq \sum_{j = -\infty}^i 2^j = 2^{i + 1}$. The statement of the theorem follows.
\end{proof}
\mynegvspace
\begin{figure}[ht]
\begin{center}
\includegraphics[width=\columnwidth,clip=true,trim= 1.5in 6.2in 1.5in 1.7in]{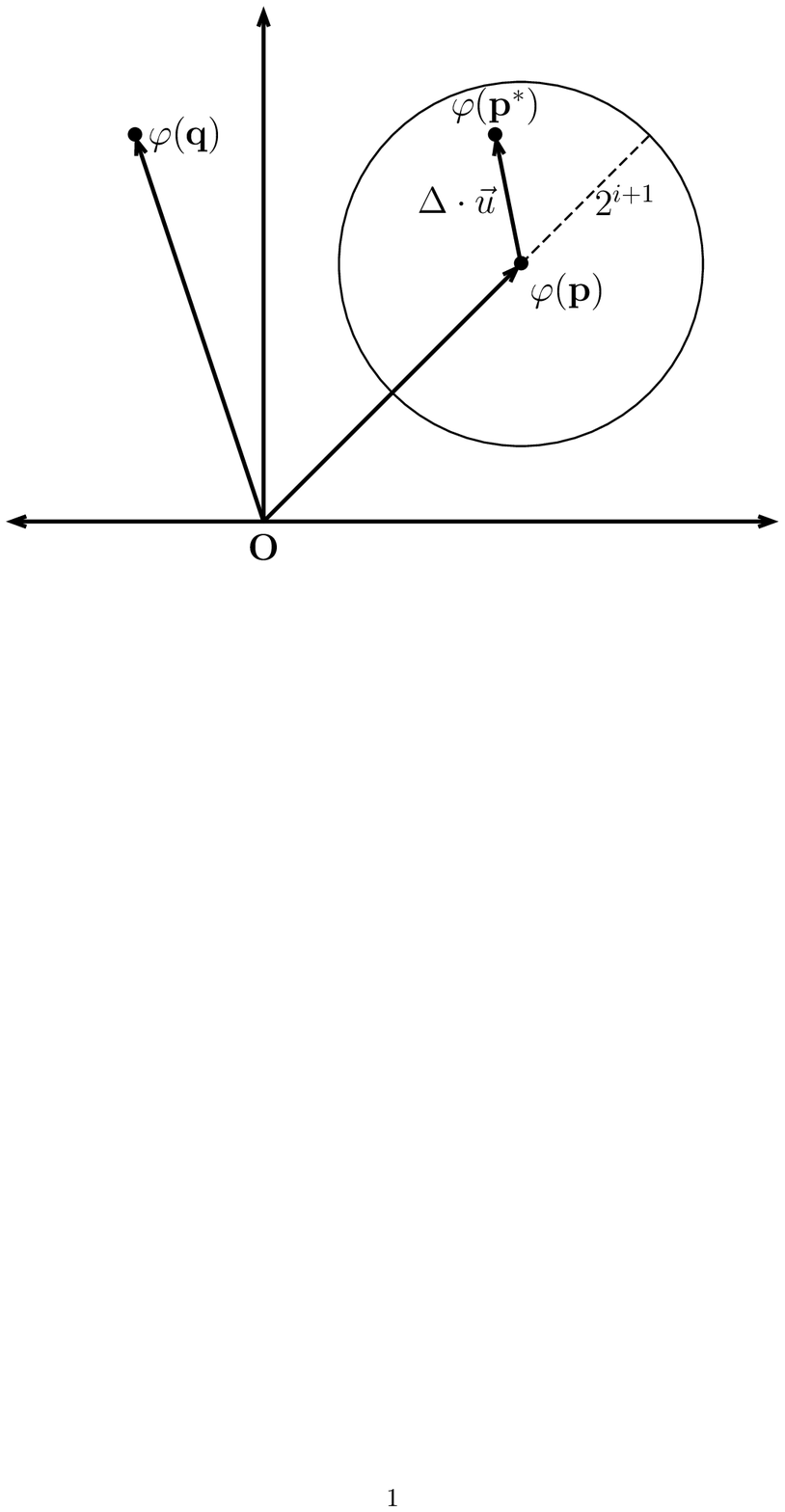}
\end{center}
\mynegvspacex
\mynegvspace
\caption{Max-kernel upper bound.}
\label{fig:pt-ball-bnd}
\mynegvspace
\mynegvspace
\mynegvspace
\end{figure}
\begin{figure*}[t]
\mynegvspacex
\begin{center}
\subfigure[{\small Best possible kernel value}]{ \label{fig:kernel-proj}
\includegraphics[width=0.31\textwidth,clip=true,trim= 1.5in 5.5in 1.7in 1.7in]{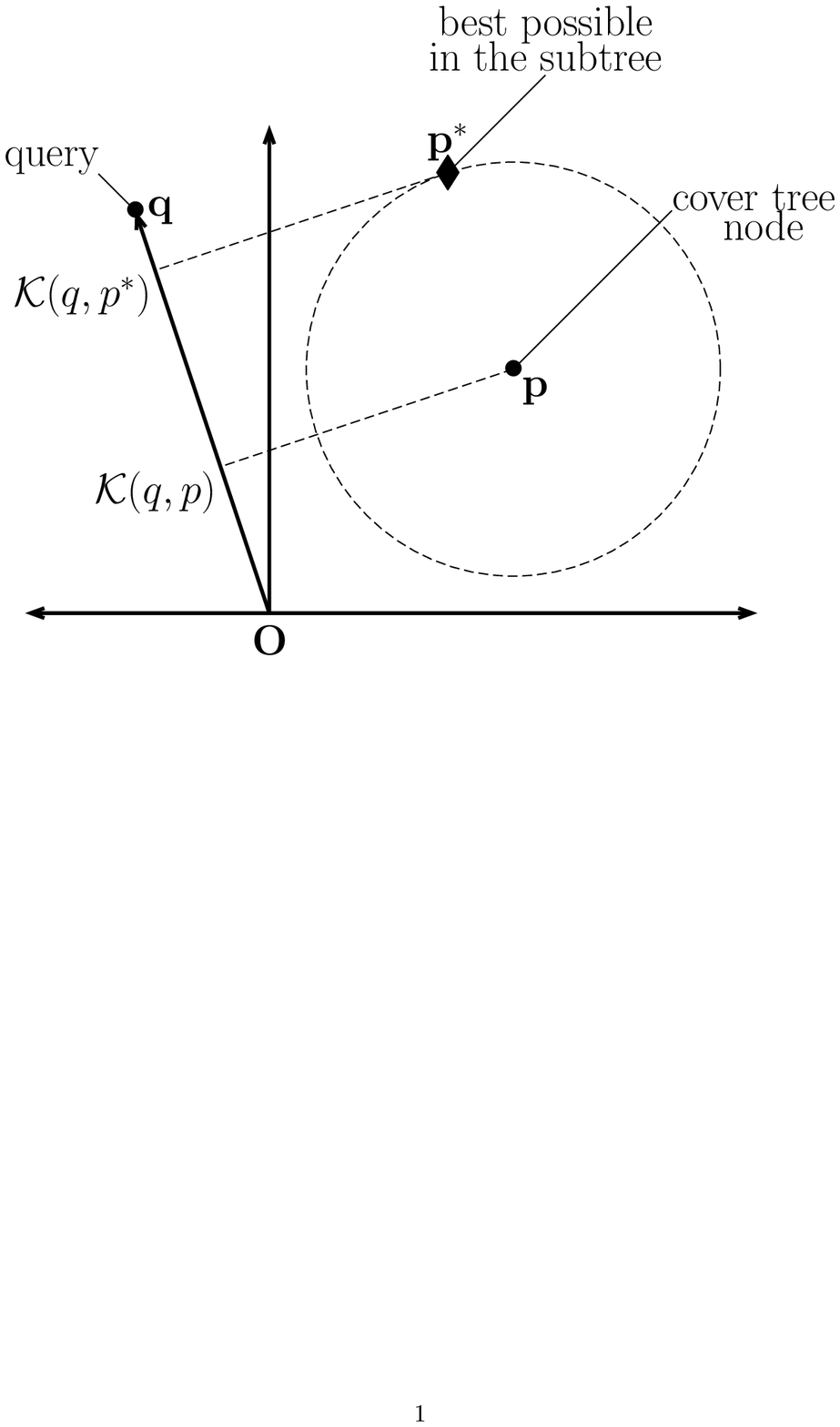}
}
\subfigure[{\small At level $i$}]{ \label{fig:prune1}
\includegraphics[width=0.31\textwidth,clip=true,trim= 1.5in 6.17in 1.7in 1.7in]{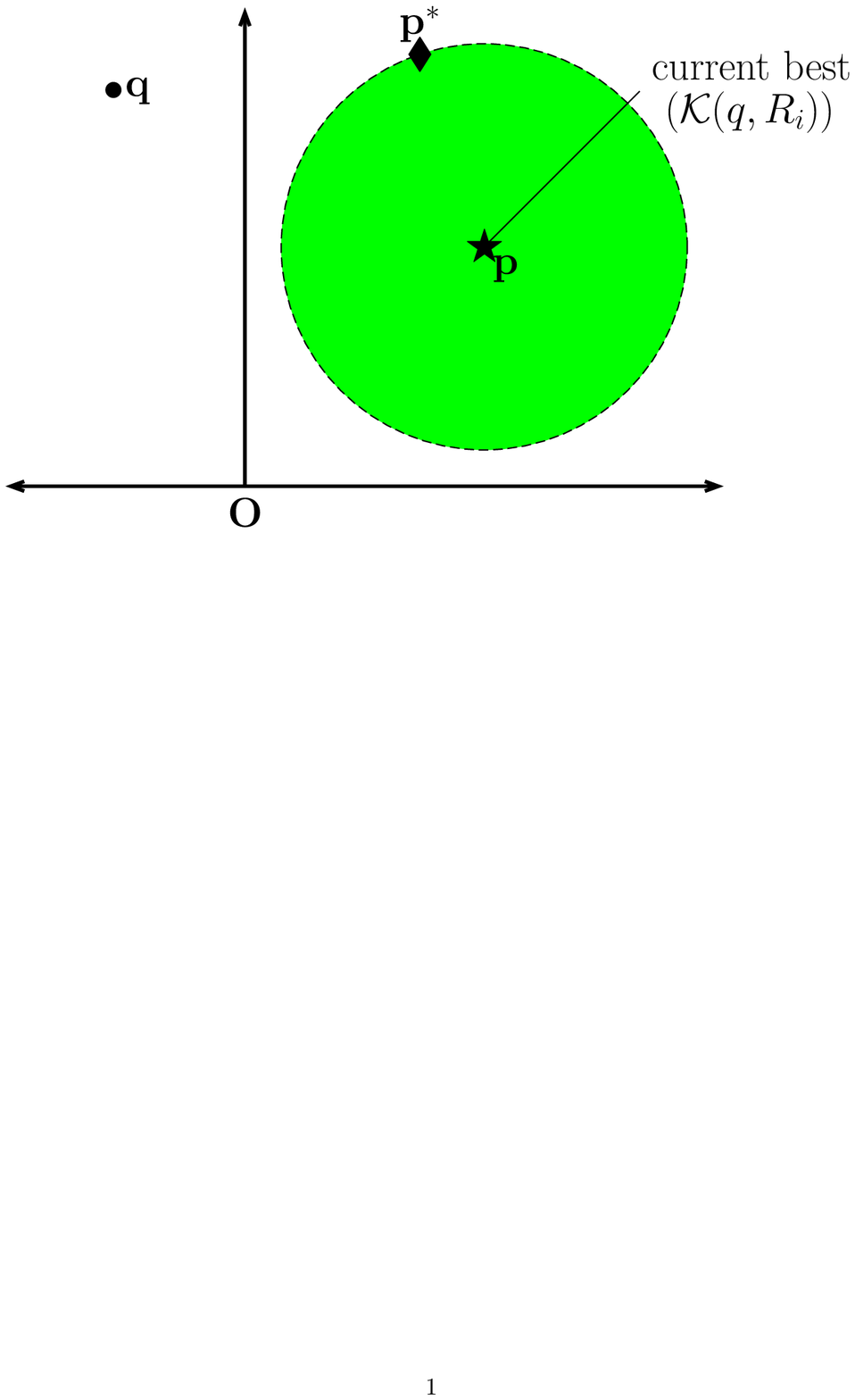}
}
\subfigure[{\small At level $i - 1$}]{ \label{fig:prune2}
\includegraphics[width=0.31\textwidth,clip=true,trim= 1.75in 6.03in 1.5in 1.7in]{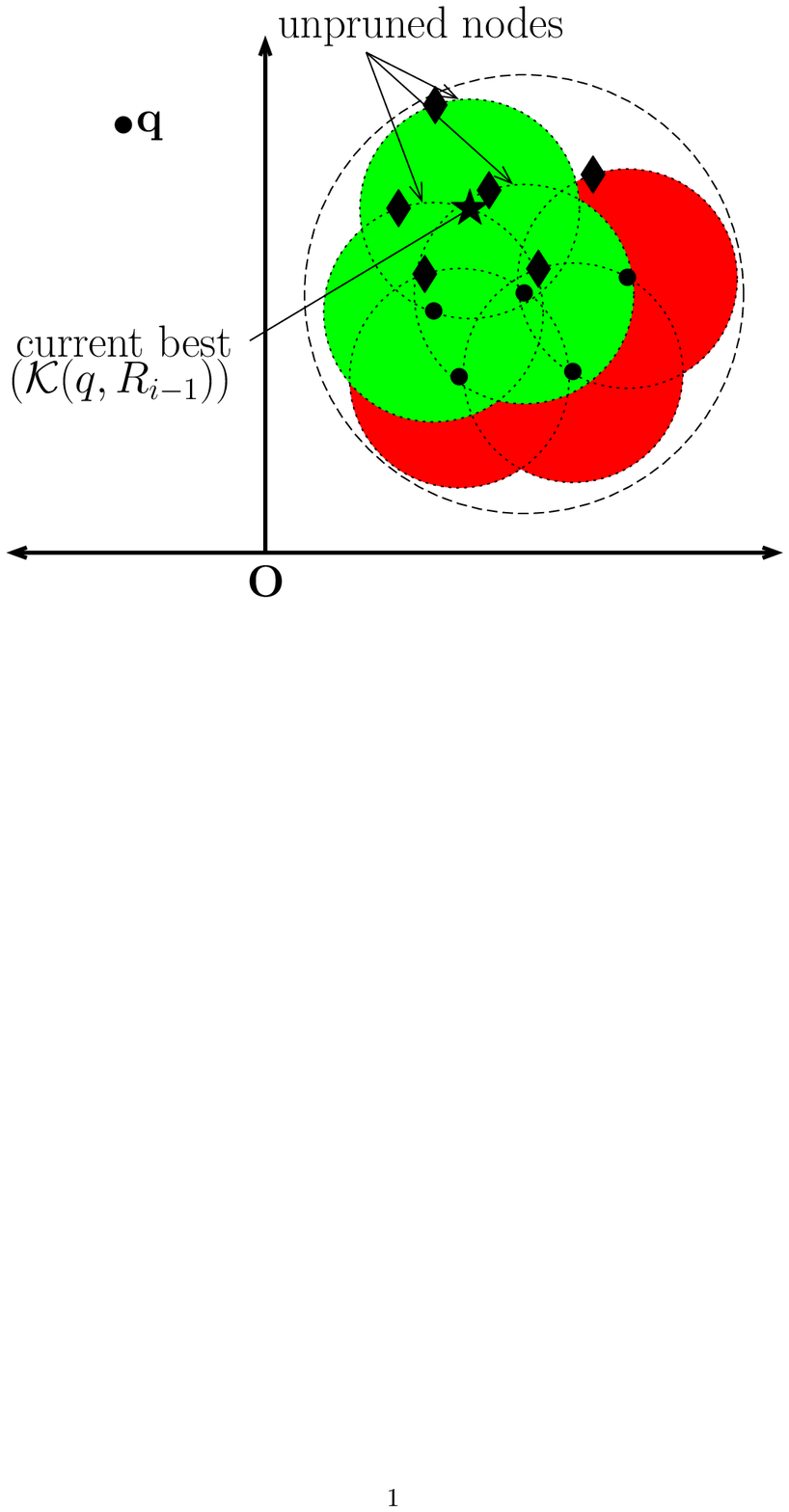}
}
\end{center}
\label{fig:pruning}
\mynegvspacex
\caption{Branch-and-bound tree-traversal -- the green nodes are retained and the red nodes are pruned.}
\mynegvspace
\mynegvspace
\mynegvspace
\end{figure*}
For normalized kernels $(\K(x,x) = 1 \forall\ x \in S)$, all the points are on the surface of a hyper-sphere in $\hilbert$. In this case, the above bound in Theorem \ref{thm:single-ball-bnd} is correct but possibly loose. In the following theorem, we present a tighter bound specifically for this condition:
\begin{theorem} \label{thm:single-ball-bnd-tight}
\mynegvspace
Consider a kernel $\K$ such that $\K(x,x) = 1 \forall\ x \in S$. Given a cover tree node rooted at an object $p$ at level $i$ in $\hilbert$ and a (query) object $q$, the maximum kernel function value between $q$ and any object in the set $S_p^i$ is bounded as:
\begin{equation} \label{eq:mko-ball-bnd-tight}
\mynegvspace
\K(q, S_p^i) \leq \begin{cases}
\begin{split} &\K(q, p)(1 - 2^{2i+1}) \\ & \quad + 2^{i+1}\sqrt{\left(1 - \K(q, p)^2 \right)\left(1 - 2^{2i}\right)}, \\ &\quad \mbox{ if } \K(q, p) \leq 1 - 2^{2i + 1} \end{split} \\
1.0, \mbox{ otherwise}
\end{cases}
\mynegvspace
\end{equation}
\mynegvspace
\mynegvspace
\end{theorem}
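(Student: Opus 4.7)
The plan is to exploit the fact that under the normalization $\K(x,x)=1$, every feature vector $\varphi(r)$ for $r\in S$ (and, implicitly, $\varphi(q)$) lies on the unit sphere of $\hilbert$, so the only free parameter for a candidate $p^\ast$ is its direction relative to $\varphi(p)$ and $\varphi(q)$. As a first step, I would translate the cover-tree covering invariant $d_\K(p,p^\ast)\le 2^{i+1}$ into an inner-product lower bound: squaring Equation~\eqref{eq:induced-metric} with $\K(p,p)=\K(p^\ast,p^\ast)=1$ gives $\K(p,p^\ast)\ge 1-2^{2i+1}$. This reduces the theorem to the geometric optimization problem: maximize $\ip{\varphi(q)}{\varphi(p^\ast)}_\hilbert$ over unit vectors $\varphi(p^\ast)$ subject to $\ip{\varphi(p)}{\varphi(p^\ast)}_\hilbert \ge 1-2^{2i+1}$.

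Next, I would parameterize both vectors in the two-dimensional plane that matters. Write $\varphi(p^\ast)=\alpha\,\varphi(p)+\beta\,\vec w$ with $\vec w\perp\varphi(p)$, $\norm{\vec w}_\hilbert=1$, and $\alpha^{2}+\beta^{2}=1$, so $\alpha=\K(p,p^\ast)\in[1-2^{2i+1},\,1]$. Similarly write $\varphi(q)=\K(q,p)\,\varphi(p)+\sqrt{1-\K(q,p)^{2}}\,\vec v$ for some unit $\vec v\perp\varphi(p)$. Taking the inner product yields $\K(q,p^\ast)=\alpha\,\K(q,p)+\beta\sqrt{1-\K(q,p)^{2}}\,\ip{\vec v}{\vec w}_\hilbert$. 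Applying Cauchy-Schwartz to the last factor and taking $\beta=\sqrt{1-\alpha^{2}}\ge 0$ (the sign that maximizes the expression) gives the one-variable bound $\K(q,p^\ast)\le \alpha\,\K(q,p)+\sqrt{1-\alpha^{2}}\sqrt{1-\K(q,p)^{2}}$.

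Finally, I would optimize this expression over $\alpha\in[1-2^{2i+1},\,1]$. Viewed as $\cos(\theta_{\alpha}-\theta_{q})$ where $\cos\theta_{\alpha}=\alpha$ and $\cos\theta_{q}=\K(q,p)$, it is unimodal with unconstrained maximum $1$ at $\alpha=\K(q,p)$. If $\K(q,p)>1-2^{2i+1}$ that maximizer is feasible and the bound degenerates to $1$, producing the ``otherwise'' branch of the statement (which also correctly absorbs the trivial regime $i\ge 0$ where $1-2^{2i+1}\le -1$ and the constraint is vacuous). If instead $\K(q,p)\le 1-2^{2i+1}$, the function is strictly decreasing on the feasible interval, so the optimum occurs at the boundary $\alpha=1-2^{2i+1}$, for which $\sqrt{1-\alpha^{2}}=2^{i+1}\sqrt{1-2^{2i}}$; substituting yields exactly the first branch of~\eqref{eq:mko-ball-bnd-tight}.

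The only subtlety I anticipate is making the reduction to the two-dimensional picture rigorous: I need to confirm that any vector satisfying the spherical and inner-product constraints is an admissible candidate for $\varphi(p^\ast)$ as far as the upper bound is concerned (it is, because we are only producing an \emph{upper} bound on the maximum), and I need to handle the edge cases $\alpha=\pm 1$ and $\K(q,p)=\pm 1$ where the orthogonal decomposition collapses; these fall into the second branch of the case split and are handled by continuity.
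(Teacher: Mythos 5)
Your proposal is correct and follows essentially the same route as the paper's proof in its appendix: both reduce the problem to the plane spanned by $\varphi(p)$ and $\varphi(q)$, translate the covering bound $d_\K(p,p^\ast)\le 2^{i+1}$ into $\K(p,p^\ast)\ge 1-2^{2i+1}$, and then maximize $\cos(\theta_{pp^\ast}-\theta_{qp})$ over the feasible angle range, with the same case split on whether the unconstrained maximizer is feasible. The only difference is cosmetic --- the paper directly asserts the angle inequality $\cos\theta_{qp^\ast}\le\cos(\{\theta_{qp}-\theta_{pp^\ast}\}_+)$, whereas your orthogonal decomposition plus Cauchy--Schwartz supplies an explicit derivation of that same step.
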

The proof is similar to the proof of Theorem \ref{thm:single-ball-bnd} and presented in Section B of the supplement.
\mynegvspace
\mynegvspace
\begin{algorithm}[H]
   \caption{\textbf{\underline{FastMKS}(Tree $T$, query $q$)}}
\begin{algorithmic}
\label{alg:single-fmko}
   \STATE \textbf{Initialize} $R_\infty = C_\infty$ \textit{// the root of the tree}
   \FOR{$i=\infty$ {\bfseries to} $-\infty$}
    \STATE $R = \{ Children(r)\colon r\in R_i\}$ \textit{// tree descend}
    \STATE $R_{i-1} = \{r\in R\colon \K(q,r) \geq K(q, R) - 2^i\sqrt{\K(q, q)} \}$
   \ENDFOR
   \STATE \textbf{return} $\arg\max_{r\in R_{-\infty}} \K(q,r)$ \textit{// the leaf nodes.}
\end{algorithmic}
\end{algorithm}
\mynegvspace
\mynegvspace

\noindent
\textbf{The branch-and-bound algorithm.} The bound on $\K(q, S_p^i)$ is used to decide whether a node is retained for further exploration or removed (``pruned'') from consideration. The branch-and-bound algorithm \textbf{FastMKS$(T, q)$} is presented in Alg. \ref{alg:single-fmko}. 
If the maximum possible kernel value from a subtree is less than the current best kernel value, that subtree is not explored any further. The best possible kernel value from a subtree and a step of the algorithm is depicted in Figure~\ref{fig:pruning}. For a retained node, all its children are explored.
The correctness of \textbf{FastMKS$(T, q)$} follows from:
\begin{theorem}\label{thm:bb-algo-correctness}
\mynegvspace
If $T$ is a cover tree on $S$, then \textbf{FastMKS$(T, q)$} (Alg. \ref{alg:single-fmko}) returns $\arg \max_{r\in S} \K(q, r)$.
\end{theorem}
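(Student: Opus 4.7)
The plan is to prove correctness by establishing a loop invariant: at the end of each iteration (i.e., at every scale $i$), the set $R_i$ contains a node whose subtree contains the true max-kernel point $p^* = \arg\max_{r \in S} \K(q, r)$. Since at scale $-\infty$ a ``subtree'' is just the point itself, this immediately yields $p^* \in R_{-\infty}$ and hence the correctness of the final $\arg\max$.

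I would proceed by induction on the scale, descending from $\infty$ to $-\infty$. For the base case, $R_\infty = C_\infty$ consists of the root, whose subtree is all of $S$, so it trivially contains $p^*$. For the inductive step, suppose at scale $i$ some $r' \in R_i$ has $p^* \in S_{r'}^{i}$. By the cover tree structure, the subtree at $r'$ decomposes into the subtrees (at scale $i-1$) rooted at each child of $r'$, and so there is some child $c \in R$ (the unpruned child set) at scale $i-1$ with $p^* \in S_{c}^{i-1}$. It remains to show that $c$ survives the pruning step, i.e.\ that
\[
\K(q, c) \;\geq\; \K(q, R) - 2^i \sqrt{\K(q,q)}.
\]
Applying Theorem~\ref{thm:single-ball-bnd} to the node $c$ at scale $i-1$ gives
\[
\K(q, p^*) \;\leq\; \K(q, S_c^{i-1}) \;\leq\; \K(q, c) + 2^{i}\sqrt{\K(q,q)}.
\]
Since $p^*$ is the global maximizer over $S$ and $R \subseteq S$, we have $\K(q,R) \leq \K(q,p^*)$. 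Chaining these two inequalities yields exactly the condition required for $c$ to be retained in $R_{i-1}$, closing the induction.

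The only real subtlety, and the step that deserves care, is the accounting of scales: the algorithm prunes using slack $2^i\sqrt{\K(q,q)}$ at the step that produces $R_{i-1}$, and this must line up with Theorem~\ref{thm:single-ball-bnd} applied at level $i-1$, which gives slack $2^{(i-1)+1}\sqrt{\K(q,q)} = 2^i \sqrt{\K(q,q)}$. One also has to be slightly careful that the child set $R$ includes self-children, so that a point already present at scale $i$ in $R_i$ is automatically represented among the candidates at scale $i-1$; this is guaranteed by the nesting invariant $C_{i-1} \supseteq C_i$ of cover trees. Finally, because the implicit representation of a cover tree extends down to $C_{-\infty}$ where every point of $S$ is present, descending all the way to $-\infty$ is well-defined (for the explicit representation, self-only branches terminate, but the invariant is preserved since such leaves equal the single point in their subtree). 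With these bookkeeping points addressed, the invariant implies $p^* \in R_{-\infty}$ and the returned $\arg\max$ equals $p^*$.
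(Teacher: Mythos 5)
Your proof is correct and follows essentially the same route as the paper's: both maintain the invariant that the node whose subtree contains $p^* = \arg\max_{r\in S}\K(q,r)$ is never pruned, using Theorem~\ref{thm:single-ball-bnd} applied one scale down (slack $2^{(i-1)+1}\sqrt{\K(q,q)} = 2^i\sqrt{\K(q,q)}$) together with $\K(q,R)\le\K(q,p^*)$. The only cosmetic difference is that the paper concludes via the limit $\lim_{i\to-\infty}\K(q,R_{i-1})=\K(q,S)$ while you conclude directly from $p^*\in R_{-\infty}$; the substance is identical.
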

\mynegvspace
We present a sketch proof. The complete proof is presented in Section C of the supplement.
\begin{proof}(sketch) By Theorem \ref{thm:single-ball-bnd}, $\K(q, R_{i -1}) \geq \K(q, S) - 2^i\sqrt{\K(q,q)} ~\forall\ i$. Therefore, $\lim\limits_{i\to -\infty} \K(q, R_{i-1}) = \K(q, S)$. Hence $\arg\max_{r\in R_{-\infty}} \K(q,r) = \arg\max_{r \in S} \K(q, r)$.
\end{proof}

\noindent
\textbf{Approximate max-kernel search.}  Similarity search problems can be approximated for further scalability. Even though we are focusing on exact max-kernel search in this paper, we wish to demonstrate that the tree based method can be very easily extended to perform the approximate max-kernel search. Approximation can be achieved in the following ways:
\mynegvspace
\begin{citemize}
\item[1.] Absolute value approximation (AVA): return $p\in S$ such that $\K(q, p) \geq \K(q,S) - \epsilon$.
\item[2.] Relative value approximation (RVA): return $p \in S$ such that $\K(q, p) \geq \K(q,S) - \epsilon |\K(q,S)|$.
\item[3.] Rank approximation (RA): return $p \in S$ such that $\left | \{r \in S \colon \K(q, r) > \K(q, p) \} \right| \leq \tau$.
\end{citemize}
\mynegvspace
Care has to be taken for relative value approximation since there is no guarantee that $\K(q,S) > 0$. The following stopping rules can be used at the end of an iteration in the for loop in \textbf{FastMKS$(T, q)$} (Alg. \ref{alg:single-fmko}) for the value approximations. The best candidate up until then is the approximate solution.
\mynegvspace
\begin{citemize}
\item[1.] AVA stopping: if $\epsilon \geq 2^i \sqrt{\K(q,q)}$, stop.
\item[2.] RVA stopping: assuming $\K(q, S) > 0$, if $\K(q, R_{i - 1}) \geq
\left(2^i / \epsilon\right) \sqrt{\K(q, q)}$, stop.\footnote{The stopping rule
can be easily modified for $\K(q,S) < 0$.}
\end{citemize}
\mynegvspace
\begin{theorem} \label{add-approx-correctness}
The AVA stopping rule returns a point $p \in S$ such that $\K(q, p) \geq \K(q, S) - \epsilon$.
\end{theorem}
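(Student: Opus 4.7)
I would follow exactly the template of the sketch for Theorem~\ref{thm:bb-algo-correctness}, adapted to stop at a finite level. Let $i^*$ denote the iteration index at whose end the AVA stopping condition first fires, so $\epsilon \geq 2^{i^*}\sqrt{\K(q,q)}$. Per the algorithm, the returned point is the best candidate seen so far; I would first argue that this quantity equals $\K(q, R_{i^*-1})$. The argument is a monotonicity claim: for every iteration $j$, every $r \in R_j$ appears as its own child in $\text{Children}(R_j)$ (cover-tree self-child property), so $R_j \subseteq R$, and the element achieving $\K(q,R)$ trivially satisfies the pruning condition $\K(q,r) \geq \K(q,R) - 2^j\sqrt{\K(q,q)}$ and survives into $R_{j-1}$. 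Hence $\K(q, R_{j-1}) \geq \K(q, R_j)$, and iterating yields $\K(q, R_{i^*-1}) \geq \K(q, R_j)$ for all processed levels $j \geq i^*-1$.

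Next I would establish the key bound $\K(q, S) \leq \K(q, R_{i^*-1}) + 2^{i^*}\sqrt{\K(q,q)}$ by case analysis on an arbitrary $s \in S$, tracing the ancestor path of $s$ in the cover tree. If the ancestor $a$ of $s$ at level $i^*-1$ still lies in $R_{i^*-1}$, Theorem~\ref{thm:single-ball-bnd} immediately gives $\K(q,s) \leq \K(q, S_a^{i^*-1}) \leq \K(q,a) + 2^{i^*}\sqrt{\K(q,q)} \leq \K(q, R_{i^*-1}) + 2^{i^*}\sqrt{\K(q,q)}$. Otherwise, there is some earliest iteration $j \geq i^*$ at which $s$'s ancestor $a$ (at level $j-1$) was pruned. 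At that iteration, the pruning rule read $\K(q,a) < \K(q,R) - 2^j\sqrt{\K(q,q)}$, so Theorem~\ref{thm:single-ball-bnd} combined with the monotonicity step yields $\K(q,s) \leq \K(q,a) + 2^j\sqrt{\K(q,q)} < \K(q,R) \leq \K(q, R_{j-1}) \leq \K(q, R_{i^*-1})$. Either case gives the desired bound on $\K(q,s)$, and taking the maximum over $s \in S$ completes this step. Combining with the stopping condition yields $\K(q, p^*) = \K(q, R_{i^*-1}) \geq \K(q, S) - 2^{i^*}\sqrt{\K(q,q)} \geq \K(q, S) - \epsilon$, as required.

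The main obstacle is the bookkeeping in the second paragraph: one must be careful to partition $S$ into (a) points whose ancestors remained in the retained sets throughout and (b) points pruned at some earlier iteration $j \geq i^*$, and then bound each kind using the matching-level version of Theorem~\ref{thm:single-ball-bnd} (noting that the level-$(j-1)$ bound is exactly $2^j\sqrt{\K(q,q)}$, which aligns with the pruning threshold used at iteration $j$). The monotonicity observation in the first paragraph, while intuitive, is what makes the pruned-case analysis close cleanly, since it lets one replace $\K(q, R_{j-1})$ with the later value $\K(q, R_{i^*-1})$ uniformly.
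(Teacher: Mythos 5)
Your proof is correct and takes essentially the same route as the paper's: the paper's two-line argument simply invokes the invariant $\K(q, R_{i-1}) \geq \K(q, S) - 2^i \sqrt{\K(q,q)}$ (established in the proof of Theorem~\ref{thm:bb-algo-correctness}) and observes that the stopping condition $2^i\sqrt{\K(q,q)} \leq \epsilon$ then gives the AVA guarantee immediately. Your monotonicity-plus-pruned/retained case analysis is just an explicit derivation of that same invariant, with the details the paper defers to its correctness proof correctly filled in.
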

\mynegvspace
\mynegvspace
\mynegvspace
\mynegvspace
\begin{proof}
At the end of iteration at level $i$, $\K(q, R_{i-1}) \geq \K(q, S) - 2^i \sqrt{\K(q,q)}$. If $2^i\sqrt{\K(q,q)} \leq \epsilon$, the AVA condition is satisfied.
\end{proof}
\mynegvspace
\mynegvspace
\mynegvspace
\begin{theorem} \label{mul-approx-correctness}
Assuming $\K(q,S) > 0$, the RVA stopping rule returns a point $p \in S$ such that $\K(q, p) \geq (1 - \epsilon) \K(q, S)$.
\end{theorem}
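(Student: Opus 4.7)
The plan is to mirror the structure of the AVA proof (Theorem on AVA stopping), replacing the absolute slack $\epsilon$ with the relative slack $\epsilon\,\K(q,S)$. The key idea is that the RVA stopping rule is precisely engineered so that the bound $2^i\sqrt{\K(q,q)}$ coming from Theorem \ref{thm:single-ball-bnd} (propagated into the branch-and-bound recursion, as already used in the correctness proof Theorem \ref{thm:bb-algo-correctness}) becomes at most a fraction $\epsilon$ of the true maximum kernel value.

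First, I would record the invariant established inside the proof of Theorem \ref{thm:bb-algo-correctness}, namely that after the iteration at level $i$ of the for-loop in \textbf{FastMKS}$(T,q)$, the retained set satisfies
\begin{equation*}
\K(q,R_{i-1}) \;\geq\; \K(q,S) - 2^i\sqrt{\K(q,q)}.
\end{equation*}
Second, since every element of $R_{i-1}$ is a node of the cover tree and hence a point of $S$, we have $R_{i-1}\subseteq S$, so trivially
\begin{equation*}
\K(q,R_{i-1}) \;\leq\; \K(q,S).
\end{equation*}

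Now suppose the RVA stopping rule triggers at level $i$, i.e.\ $\K(q,R_{i-1}) \geq (2^i/\epsilon)\sqrt{\K(q,q)}$. Rearranging and using the inequality above yields
\begin{equation*}
2^i\sqrt{\K(q,q)} \;\leq\; \epsilon\,\K(q,R_{i-1}) \;\leq\; \epsilon\,\K(q,S),
\end{equation*}
where positivity of $\K(q,S)$ is used to keep the direction of the inequality after multiplying by $\epsilon>0$. Substituting this bound into the invariant from the first step then gives
\begin{equation*}
\K(q,R_{i-1}) \;\geq\; \K(q,S) - \epsilon\,\K(q,S) \;=\; (1-\epsilon)\,\K(q,S),
\end{equation*}
and since the returned point $p$ is the best candidate so far (whose kernel value is at least $\K(q,R_{i-1})$), the conclusion $\K(q,p)\geq(1-\epsilon)\K(q,S)$ follows.

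There is no serious obstacle: the only subtle point is justifying $\K(q,R_{i-1})\leq\K(q,S)$, which is immediate from $R_{i-1}\subseteq S$, and the use of the assumption $\K(q,S)>0$ (needed to convert the stopping-rule bound on $2^i\sqrt{\K(q,q)}$ into a multiplicative bound with the correct sign). The footnote in the paper already flags the sign-handling issue for $\K(q,S)<0$, so this proof need only handle the stated positive case.
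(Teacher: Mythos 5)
Your proof is correct and follows essentially the same route as the paper's: invoke the invariant $\K(q,R_{i-1}) \geq \K(q,S) - 2^i\sqrt{\K(q,q)}$, use the stopping condition together with $\K(q,R_{i-1}) \leq \K(q,S)$ to get $2^i\sqrt{\K(q,q)} \leq \epsilon\,\K(q,S)$, and substitute. The only difference is that you explicitly justify the step $\K(q,R_{i-1}) \leq \K(q,S)$ via $R_{i-1}\subseteq S$, which the paper leaves implicit.
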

\mynegvspace
\mynegvspace
\mynegvspace
\mynegvspace
\begin{proof}
When $\K(q, R_{i-1}) \geq \left(2^i / \epsilon \right) \sqrt{\K(q,q)}$, then $2^i \sqrt{\K(q,q)} \leq \epsilon \K(q, R_{i-1}) \leq \epsilon \K(q,S)$. Using this, we have $\K(q, R_{i-1}) \geq \K(q, S) - 2^i \sqrt{\K(q,q)} \geq \K(q,S) - \epsilon \K(q,S)$.
\end{proof}
\mynegvspace
Rank-approximation can be achieved by performing stratified sampling on the cover tree \cite{ram2009rank}. The technique is more involved and presented in Section D of the supplement for the lack of space.

\section{Runtime analysis} \label{sec:analysis}
The runtime analysis of \textbf{FastMKS} will make use of the following results \cite{beygelzimer2006cover}:
\mynegvspace
\begin{lemma}
\label{lma:wb}
The number of children of any node $p$ is bounded by $c^4$.
\end{lemma}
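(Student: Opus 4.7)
The plan is to combine the two structural invariants of the cover tree (covering and separation) with the doubling behavior guaranteed by the expansion constant via a packing argument.

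First, I would observe that any child of the node $p$ at level $i$ is, by definition, a node at level $i-1$ whose parent is $p$. By the covering invariant, every such child $q$ satisfies $d_\K(p,q) \le 2^i$, so all children lie in the ball $B_S(p,2^i)$. Second, distinct children $q_j, q_k$ both belong to $C_{i-1}$, so by the separation invariant applied at level $i-1$ we have $d_\K(q_j,q_k) > 2^{i-1}$. In other words, the children form a $2^{i-1}$-separated set inside $B_S(p,2^i)$.

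The next step is to convert this separated set into a disjoint family of balls so that expansion-constant counting can be applied. The open balls $B_S(q_k, 2^{i-2})$ are pairwise disjoint: any common point would, by the triangle inequality, force $d_\K(q_j,q_k) \le 2 \cdot 2^{i-2} = 2^{i-1}$, contradicting the separation invariant. Moreover, each of these small balls lies inside $B_S(p, 2^i + 2^{i-2}) \subseteq B_S(p, 2^{i+1})$ by the triangle inequality. In particular, if $N$ is the number of children, then the $N$ disjoint balls $B_S(q_k, 2^{i-2})$ (each containing at least its center) are all packed inside a single ball around $p$ (or, equivalently, around any fixed child $q^*$, since all other children lie within distance $2^{i+1}$ of $q^*$).

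Finally, I would turn this packing into the counting bound by iterating the expansion-constant inequality $|B_S(x,2\Delta)| \le c\,|B_S(x,\Delta)|$ four times. Concretely, picking a distinguished child $q^*$ and noting that every other child sits inside $B_S(q^*,2^{i+1})$, four successive doublings relate $|B_S(q^*,2^{i+1})|$ down to $|B_S(q^*,2^{i-3})|$, producing a factor of $c^4$. The separation invariant ensures that the shrunken ball contributes only the singleton $\{q^*\}$ on the child side, so the chain of inequalities collapses to $N \le c^4$.

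The main obstacle is the bookkeeping at the last step: one must choose the center (either $p$ or a distinguished child) and the inner radius so that exactly four applications of the doubling inequality are needed and so that the innermost ball, when intersected with the set of children, reduces to a singleton. A sloppier choice of radii (e.g., centering only at $p$ and shrinking to $2^{i-2}$) yields a weaker power of $c$; getting the bound down to $c^4$ precisely is where the invariants must be used in tandem with the triangle inequality to shift the center from $p$ to a child of $p$.
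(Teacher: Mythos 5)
You should first note that the paper does not actually prove this lemma: it is imported verbatim from Beygelzimer et al.\ \cite{beygelzimer2006cover} (their ``width bound''), so the relevant comparison is with the packing argument given there. Your setup matches that argument exactly: the children of $p$ at level $i$ lie in $B_S(p,2^i)$ by the covering invariant, are pairwise more than $2^{i-1}$ apart by the separation invariant at level $i-1$, and hence the balls $B_S(q_k,2^{i-2})$ are pairwise disjoint and all contained in $B_S(p,2^i+2^{i-2})$. Up to that point everything is correct.

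The final counting step, however, is not a valid deduction. You argue that $|B_S(q^*,2^{i+1})|\le c^4|B_S(q^*,2^{i-3})|$ and that $B_S(q^*,2^{i-3})$ contains only the single child $q^*$, and from this conclude $N\le c^4$. But the expansion constant controls the cardinality of balls counted over \emph{all} of $S$, not over the set of children of $p$: the ball $B_S(q^*,2^{i-3})$ may contain arbitrarily many points of $S$ that are not children of $p$, so knowing it contains only one child tells you nothing about $|B_S(q^*,2^{i-3})|$; and in the other direction $|B_S(q^*,2^{i+1})|$ overcounts $N$ by an uncontrolled amount, so the chain never produces an inequality on $N$ itself. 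The step that actually closes the argument is the packing sum you set up but never use. Let $q_{k_0}$ be the child minimizing $|B_S(q_k,2^{i-2})|$. Then
\[
N\,\bigl|B_S(q_{k_0},2^{i-2})\bigr| \;\le\; \sum_{k} \bigl|B_S(q_k,2^{i-2})\bigr| \;\le\; \bigl|B_S(p,2^{i}+2^{i-2})\bigr| \;\le\; \bigl|B_S(q_{k_0},2^{i+2})\bigr| \;\le\; c^4\,\bigl|B_S(q_{k_0},2^{i-2})\bigr|,
\]
where the first inequality is the choice of $q_{k_0}$, the second is disjointness plus containment in the common ball, the third is the triangle inequality ($2^i+2^{i-2}+d_\K(p,q_{k_0})\le 2^{i+1}+2^{i-2}\le 2^{i+2}$), and the last is four doublings from $2^{i+2}$ down to $2^{i-2}$. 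Dividing by $|B_S(q_{k_0},2^{i-2})|\ge 1$ (each ball contains its center, as you observed) gives $N\le c^4$. So your disjoint balls and your instinct to re-center at a distinguished child are exactly the right ingredients; the missing idea is that the number of children must enter through the \emph{sum of the cardinalities of the disjoint small balls}, not through the cardinality of the large ball intersected with the children.
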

\mynegvspace
\mynegvspace
\begin{lemma}
\label{lma:db}
\mynegvspace
\mynegvspace
The maximum depth of any point $p$ in the explicit representation is $O(c^2 \log n)$.
\end{lemma}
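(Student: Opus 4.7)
The plan is to reproduce the standard cover tree argument of Beygelzimer, Kakade, and Langford that establishes this exact bound. Fix a point $p$ and let $p = p_D, p_{D-1}, \ldots, p_0$ be its chain of explicit ancestors on the root-to-$p$ path, appearing at scales $i_D < i_{D-1} < \cdots < i_0$. The goal is to bound $D = O(c^2 \log n)$. My strategy is to partition the chain into maximal runs of consecutive explicit levels on which the explicit ancestor is a single fixed point $q \in S$; if there are $m$ such runs and the longest has length $L$, then $D \leq mL$.

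First, I would bound $m$ by $O(\log n)$. Each run boundary corresponds to descending into a genuinely new child of the current ancestor in the cover tree, so the sequence of run-defining points $q_1, q_2, \ldots, q_m$ traces a strictly-branching path in the contracted explicit tree. Since every cover tree node has at most $c^4$ children by Lemma~\ref{lma:wb}, and since the total number of points in $S$ is $n$, a balanced-branching amortization over the $n$ leaves of the explicit tree gives $m = O(\log n)$.

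Second, and this is the main obstacle, I would bound $L$ by $O(c^2)$. Within a run, the same point $q$ serves as the explicit ancestor across many consecutive scales, and each such scale is present in the explicit representation only because of side branching (a non-self parent at that scale, or a non-self child somewhere below). Using the separation invariant (scale-$i$ points are pairwise $>2^i$ apart) together with the covering invariant and a doubling argument driven by the expansion constant, the number of scale-$i$ cover tree nodes within distance $O(2^i)$ of $q$ is bounded by $c^{O(1)}$. Carefully amortizing these local packing bounds as the scale halves across the run yields $L = O(c^2)$, and combining with $m = O(\log n)$ produces the claim $D = O(c^2 \log n)$. The delicate part is the dyadic packing analysis underlying $L = O(c^2)$: obtaining precisely the exponent $2$ requires tracking how ball-packing bounds around $q$ interact as the scale drops dyadically through the run, and it is here that a naive argument would yield a worse $c^4$ or $c^6$ factor.
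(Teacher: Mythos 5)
First, a point of reference: the paper does not prove this lemma at all --- it is imported verbatim from Beygelzimer et al.\ \cite{beygelzimer2006cover} as a known property of cover trees, so there is no in-paper argument to compare against and your proposal has to stand on its own as a reconstruction of the cited proof. It does not stand.

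The fatal step is the claim that the number $m$ of distinct points appearing as explicit ancestors of $p$ is $O(\log n)$ because every node has at most $c^4$ children and a ``balanced-branching amortization over the $n$ leaves'' applies. A bound on the branching factor yields a \emph{lower} bound on the depth of a tree with $n$ leaves, never an upper bound; the explicit cover tree is not balanced, and nothing combinatorial prevents a root-to-leaf path from visiting $\Theta(n)$ distinct points. Any correct bound on $m$ must go through the metric: the covering invariant gives $d(p,q_j)\leq 2^{i_j+1}$ for every explicit ancestor $q_j$ at scale $i_j$, nesting plus separation make distinct ancestors pairwise far apart at the lowest scale they share, and the expansion constant converts this containment-plus-separation into a packing bound --- that packing argument is precisely where the exponent $2$ in $c^2$ originates, not in any run-length analysis. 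Your second step has the complementary problem: the bound $L=O(c^2)$ on the length of a run of a single fixed ancestor $q$ is asserted, not derived. The only available mechanism is that each additional explicit level of $q$ at scale $i$ forces a non-self child of $q$ in the annulus of radii $(2^{i-1},2^{i}]$ around $q$; these annuli are disjoint across scales, which yields $\left|B_S(q,2^{i_{\mathrm{top}}})\right|\geq L$ and hence only $L\leq n$. The expansion constant bounds how fast balls may \emph{grow}, not how slowly, so it does not force the geometric growth across consecutive nonempty annuli that an $L=O(c^2)$ (or even $L=O(\log n)$) bound would need. As written, your decomposition $D\leq mL$ establishes nothing better than $D\leq n$: the two places where you invoke $O(\log n)$ and $O(c^2)$ are exactly the places where the argument is missing.
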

\mynegvspace
\mynegvspace
The main result of this section is the search time complexity of \textbf{FastMKS$(T, q)$} in terms of the number of objects in $S$ and the properties of $(S, \K)$:
\begin{theorem} \label{thm:runtime}
\mynegvspace
Given a Mercer kernel $\K$, if the dataset $S$ of size $n$ has an expansion constant $c$ (with the metric $d_\K$) and a directional concentration constant $\gamma$, \textbf{FastMKS$(T, q)$} requires $\mathbf{O}(c^{12}\gamma^2 \log n)$ time.
\end{theorem}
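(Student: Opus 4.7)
The plan is to express the total running time of \textbf{FastMKS} as (per-level work) $\times$ (number of non-trivial levels). Lemma~\ref{lma:db} immediately bounds the latter by $O(c^2 \log n)$: the explicit representation has depth $O(c^2 \log n)$, and the outer for-loop can skip levels at which no explicit node changes. The remainder of the argument bounds the per-level work in terms of $\gamma$ and $c$.

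I would first rephrase the pruning criterion in the language of projections. A node $r$ is retained in $R_{i-1}$ only if $\K(q,r) \geq \K(q,R) - 2^i \sqrt{\K(q,q)}$. Dividing through by $\sqrt{\K(q,q)}$, this is exactly the condition that the projection $\langle \hat{u}, \varphi(r)\rangle$ of $\varphi(r)$ onto the unit vector $\hat{u} = \varphi(q)/\sqrt{\K(q,q)}$ lies in an interval of width $2^i$ along $\hat{u}$. Applying Definition~\ref{def:direction-conc-constant} with $\Delta = 2^{i-1}$ then covers the set of points in $S$ with projections in this interval by at most $\gamma$ closed balls of radius $2^{i-1}$ in $d_\K$.

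I would then combine this with the cover-tree invariants. Since $R_{i-1}\subseteq C_{i-1}$, its elements are pairwise at $d_\K$-distance greater than $2^{i-1}$, so a standard packing argument (of the style of Lemma~4.1 of~\cite{beygelzimer2006cover}) using the bounded expansion constant shows that any ball of radius $2^{i-1}$ contains only $c^{O(1)}$ such separated points. Hence $|R_{i-1}| = O(\gamma\, c^{O(1)})$. By Lemma~\ref{lma:wb}, expanding $R_i$ into its children costs at most $c^4|R_i|$ kernel evaluations (self-kernels being cached at tree-construction time). Chaining the same concentration argument at the parent level to bound $|R_i|$ and multiplying the per-level cost by the $O(c^2 \log n)$ depth yields the claimed $O(c^{12}\gamma^2 \log n)$ time.

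The main obstacle is the constant-tracking: pinning down the packing exponent so that $c^{O(1)}\cdot c^4\cdot c^2$ collapses exactly to $c^{12}$, and explaining where the second power of $\gamma$ enters. I expect the additional $\gamma$ to come from the fact that the directional-concentration bound must be applied at two adjacent levels -- once to the retained set $R_{i-1}$ via the width-$2^i$ window, and once to the pre-pruning child set $R$ obtained by expanding $R_i$ via the inherited width-$2^{i+1}$ window -- and these two applications compose multiplicatively when the children count is itself expressed in terms of $|R_i|$. Everything else in the proof is a direct chaining of the covering, separation, and children-count invariants of the cover tree with the two concentration constants.
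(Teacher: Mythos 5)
There is a genuine gap in your accounting of the number of iterations, and it propagates into a wrong explanation of where the factor $\gamma^2$ comes from. Lemma~\ref{lma:db} bounds the explicit \emph{depth of a single point}, not the number of non-trivial levels of the whole traversal: different points of $R^*$ (the last explicit set) become explicit at different scales, so the number of iterations at which the explicit representation changes is bounded only by the \emph{sum} of explicit depths over the surviving points, i.e.\ by $k\,|R^*| \leq k \max_i |R_i|$ with $k = O(c^2 \log n)$, not by $k$ alone. This is exactly how the paper argues. With your level count of $O(c^2\log n)$ and per-level work $O(\gamma c^{5}\cdot c^4)$ you would land at $O(\gamma c^{11}\log n)$, which is \emph{stronger} than the theorem and not justified; the paper instead gets total work $O\bigl(k \max_i|R_i|^2 + k\max_i|R_i|\cdot c^4\bigr)$, and the $\gamma^2$ arises because the iteration count carries one factor of $\max_i|R_i|\leq \gamma c^5$ and the per-iteration cost of computing $\K(q,R)$ and testing the pruning rule carries another. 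Your proposed mechanism for the second $\gamma$ --- composing the concentration bound multiplicatively at two adjacent levels --- does not work: the pre-pruning child set $R$ is bounded by $c^4|R_i| \leq c^4\gamma c^5$ via Lemma~\ref{lma:wb}, which contributes only a single factor of $\gamma$, and nowhere do two applications of Definition~\ref{def:direction-conc-constant} multiply.

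The remaining ingredients of your sketch are essentially the paper's: rewriting the pruning test as membership in a projection interval along $\hat u = \varphi(q)/\norm{\varphi(q)}$, covering that interval set by $\gamma$ balls via the directional concentration constant, and packing the $2^{i-1}$-separated points of $C_{i-1}$ into those balls using the expansion constant to get $|R_i| \leq \gamma c^{O(1)}$ (the paper gets $\gamma c^5$). One technical point you gloss over: the definition requires the interval to be symmetric about the projection of a \emph{data point}, so the one-sided window $[\K(q,R) - 2^i\norm{\varphi(q)},\, \kappa^*]$ must first be enlarged to a symmetric window of half-width $2^{i+1}$ centered at $\ip{\hat u}{\varphi(r)}$ for some $r$ in the set; your choice of $\Delta = 2^{i-1}$ is too small for this embedding. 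These are fixable details, but the iteration-count error and the incorrect source of $\gamma^2$ are the substantive gaps.
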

\mynegvspace
\mynegvspace
\mynegvspace
\mynegvspace
\begin{proof}
The first part of the proof is similar to the runtime analysis of NNS with cover trees \cite{beygelzimer2006cover}. Let $R^*$ denote the last explicit $R_i$ considered by the algorithm. By Lemma \ref{lma:db}, the explicit depth of any point in $R^*$ is at most $k = O(c^2 \log n)$.
The maximum number of iterations required would be at most $k|R^*| \leq k \max_i |R_i|$.
The amount of work done in each iteration is at most $O(\max_i |R_i|)$, hence resulting in a total of $O(k \max_i |R_i|^2)$ work. Moreover, from Lemma \ref{lma:wb}, the total number of children encountered throughout the whole algorithm is at most $k \max_i |R_i| \cdot c^4$. Hence, the pruning/retaining rule does at most $O(k \max_i |R_i| \cdot c^4)$ work. Also, $R_{-\infty} \leq \max_i |R_i|$. Therefore, the algorithm requires at most $O(k \max_i |R_i|^2 + k \max_i |R_i| \cdot c^4)$ time.

Now we bound $\max_i |R_i|$. Let $u = \varphi(q) / \norm{\varphi(q)}$.
Then
$I_S(\varphi(q), [a, b]) = I_S\left(u, \left[a / \norm{\varphi(q)}, b/\norm{\varphi(q)}\right] \right)$.
For any level $i$, let $R = \{\mbox{Children}(r)\colon r \in R_i \}$ and let $\kappa = \K(q, R)$ and $\kappa^* = \K(q, S)$. Then
\begin{eqnarray*}
\mynegvspace
\mynegvspace
R_{i-1} & = & \{r \in R\colon \K(q, r) \geq \kappa - 2^i \norm{\varphi(q)} \} \\
    & = & I_S(\varphi(q), [\kappa - 2^i\norm{\varphi(q)}, \kappa^*]) \cap R \\
 & \subseteq & I_S(\varphi(q), [\kappa - 2^i\norm{\varphi(q)}, \kappa^*]) \cap C_{i-1} \\
    & \subseteq & I_S(\varphi(q), [\kappa^* - 2^{i+1}\norm{\varphi(q)}, \kappa^*]) \cap C_{i-1}
\mynegvspace
\mynegvspace
\end{eqnarray*}
since $\kappa^* \leq \kappa + 2^i \norm{\varphi(q)}$. 

\noindent
Then for any $r\in I_S(\varphi(q), [\kappa^* - 2^{i+1}\norm{\varphi(q)}, \kappa^*])$,
\begin{equation*} 
\mynegvspace
\begin{split}
& I_S(\varphi(q), [\kappa^* - 2^{i+1}\norm{\varphi(q)}, \kappa^*]) \\
   &\quad \subseteq  I_S(q, [\K(q,r) - 2^{i+1}\norm{\varphi(q)}, \K(q,r) + 2^{i+1}\norm{\varphi(q)}]) \\
 &\quad = I_S(u, [\ip{u}{\varphi(r)} - 2^{i+1}, \ip{u}{\varphi(r)} + 2^{i+1}]).
\end{split}
\end{equation*}
By the definition of the directional concentration constant, there exists $r_j$s such that:
\begin{equation*} 
\mynegvspace
\begin{split}
& I_S(u, [\ip{u}{\varphi(r)} - 2^{i+1}, \ip{u}{\varphi(r)} + 2^{i+1}]) \\
&\quad \subseteq \cup_{j=1}^\gamma B_S(r_j, 2^{i+1}).
\end{split} 
\mynegvspace
\end{equation*}
Bounding the size of $I_S(q, [K - 2^i\norm{\varphi(q)}, \K(q,S)]) \cap C_{i-1}$ amounts to bounding the number of disjoint balls of radius $2^{i-2}$ that can be packed into each of the $\gamma$ balls $B_S(r_j, 2^{i+1} + 2^{i-2})$. For each of the $r_j$, we have:
\begin{equation*} \mynegvspace 
\begin{split}
& |B_S(r_j, 2^{i+1} + 2^{i-2})| \leq |B_S(r', 2^{i+2} + 2^{i-1})| \\
&\quad \leq |B_S(r', 2^{i+3})| \leq c^5|B_S(r, 2^{i-2})|.
\end{split}
\mynegvspace
\end{equation*}
Hence, $\forall\ i$, $|R_i| \leq \gamma c^5$. Hence $\max_i |R_i| \leq \gamma c^5$, thus giving us the statement of the theorem.
\end{proof}
Comparing to the query time $O(c^{12} \log n)$ for NNS
\cite{beygelzimer2006cover}, it is clear that \textbf{FastMKS} has similar $\log
n$ scaling, but also has an extra price of $\gamma^2$ for solving the more general problem of max-kernel search.

\section{Evaluation} \label{sec:eval}
We evaluate the \textbf{FastMKS} algorithm with different kernels and datasets.
For every experiment, we query the top $\{1, 2, 5, 10\}$ max-kernel candidates
and report the speedup over linear search (in terms of the number of kernel
evaluations performed). The cover tree and the algorithm is developed in
MLPACK~\cite{curtin2011mlpack} with the implementation details in Section E of
the supplement.\footnote{See http://www.mlpack.org/ for more information on
MLPACK.}
\begin{figure*}[t]
\begin{center}
\centerline{
\subfigure[Cosine and hyperbolic tangent kernels]{
\includegraphics[height=0.47\textwidth, angle = -90]{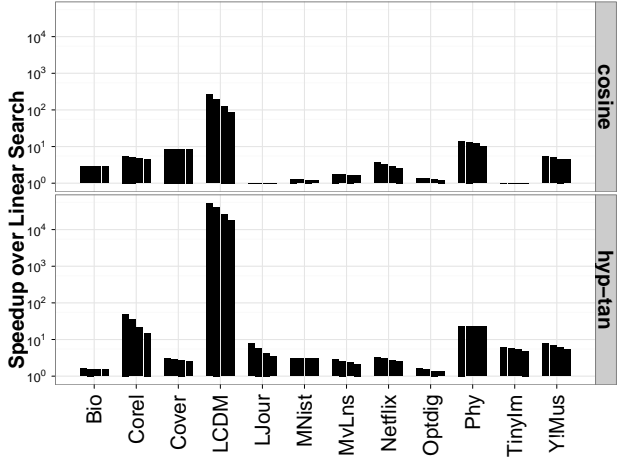}
}
\subfigure[Linear and polynomial (degree 10) kernels]{
\includegraphics[height=0.47\textwidth, angle = -90]{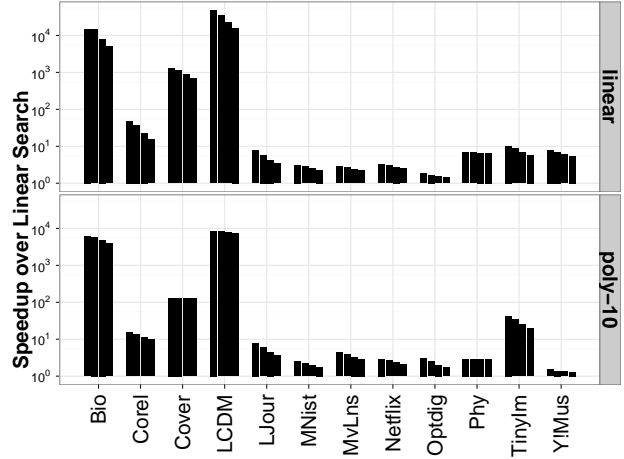}
}
}
\end{center}
\mynegvspacex
\mynegvspace
\caption{Speedups over linear scan for \textbf{FastMKS} with $k = 1,2,5,10$.}
\label{fig:speedups-all}
\mynegvspace
\mynegvspace
\mynegvspace
\end{figure*}

\vspace{0.05in}
\noindent
\textbf{Datasets.} We use two different classes of datasets -- (1) Datasets with fixed-length objects: These include the MNIST dataset \cite{lecun2000mnist}, the Isomap ``Images'' dataset, several datasets from the UCI machine learning repository \cite{ucimlrepository}, three collaborative filtering datasets (MovieLens, Netflix \cite{bennett2007netflix}, Yahoo! Music
\cite{dror2011yahoo}), the LCDM astronomy dataset \cite{lupton2001sdss}, the LiveJournal blog moods text dataset \cite{Kim2011pre} and a subset of the 80 million tiny images dataset \cite{torralba200880} (the sizes of the datasets are presented in Table~\ref{tab:datasets}). (2) Datasets without fixed length representation: We consider protein sequences from the GenBank\footnote{ftp://ftp.ncbi.nih.gov/refseq/release/complete/}.
\begin{table}[h]
\begin{center}
{\small
\begin{tabular}{|l|c|c|c|}
\hline
Datasets & $|Q|$ & $|S|$ & dims \\ \hline
\textbf{Y! Music} & 100000 & 600000 & 50 \\ \hline
\textbf{MovieLens} & 6040 & 3706 & 50 \\ \hline
\textbf{Opt-digits} & 450 & 1347 & 64 \\ \hline
\textbf{Physics} & 37500 & 112500 & 78 \\ \hline
\textbf{Bio} & 75000 & 210409 & 74 \\ \hline
\textbf{Covertype} & 150000 & 431012 & 55 \\ \hline
\textbf{LiveJournal} & 10000 & 10000 & 25327 \\ \hline
\textbf{MNIST} & 10000 & 60000 & 784 \\ \hline
\textbf{Netflix} &  480189 & 17770 & 50 \\ \hline
\textbf{Corel} & 10000 & 27749 & 32 \\ \hline
\textbf{LCDM} & 6000000 & 10777216 & 3 \\ \hline
\textbf{TinyImages} & 5000 & 1000000 & 384 \\ \hline
\end{tabular}
}
\label{tab:datasets}
\end{center}
\mynegvspacex
\caption{Details of the vector datasets.}
\mynegvspace
\mynegvspace
\mynegvspace
\end{table}

\vspace{0.05in}
\noindent
\textbf{Kernels.} We consider all of the following kernels for the vector
datasets: \textit{cosine}, \textit{hyperbolic tangent}\footnote{While the
hyperbolic tangent kernel is not Mercer in general, our proposed algorithm works
correctly with non-Mercer kernels if the kernel is positive definite when
restricted to the dataset.}, \textit{linear}, and \textit{polynomial} (with
degree 10). While \textbf{FastMKS} is applicable to any kernel, MKS with the
Gaussian kernel reduces to NNS in the input space; hence, we omit this kernel from our experiments. The $p$-spectrum kernel \cite{leslie2002spectrum} is used for the protein
sequence data.

\noindent
\textbf{Results.} The results for the vector datasets are summarized in Figure
\ref{fig:speedups-all}. While the speedups range from anywhere between $1$ to
over $10^4$, a speedup close to an order of magnitude is seen in most large
datasets (except MNIST). It is also quite clear that different kernels give very
different speedups for the same dataset. This can be attributed to the fact that
the expansion constant and the directional concentration constant are properties
of the dataset-kernel pair. The results for the protein sequence data (Figure
\ref{fig:protein-seq-scaling}) indicate that the speedups increase with
increasing reference set size, recording a speedup of over two orders of
magnitude for a set of around $100000$ sequences, exhibiting the logarithmic
scaling of \textbf{FastMKS}.
\begin{figure}[h]
\mynegvspace
\begin{center}
\includegraphics[height=0.85\columnwidth, angle = -90]{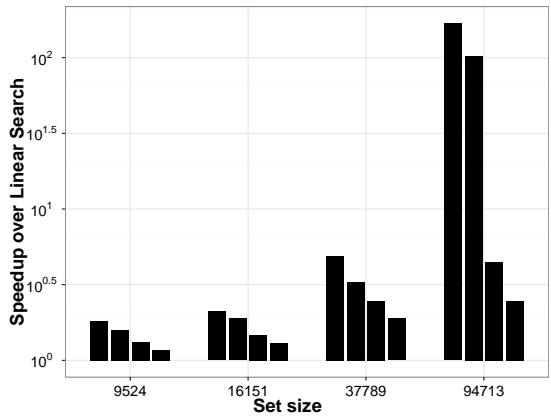}
\end{center}
\mynegvspacex
\caption{Speedups over linear scan for \textbf{FastMKS} with $k = 1,2,5,10$ with sets of protein sequences of increasing size to demonstrate the scaling of \textbf{FastMKS}.}
\label{fig:protein-seq-scaling}
\mynegvspace
\mynegvspace
\mynegvspace
\end{figure}

\section{Distributed data} \label{sec:par}
Despite the theoretical and empirical efficiency of \textbf{FastMKS}, the algorithm still requires the dataset (and the tree) to fit completely in the memory of a single machine. Large datasets generally need to be distributed across multiple nodes to load the whole data in memory. While a tree can be constructed on distributed data, this tree is inefficient for MKS because of the high internode communication in the branch-and-bound algorithm. In this section, we discuss a method to extend \textbf{FastMKS} to the distributed data setting and compare it to the obvious baseline. Since communication and data exchange are major bottlenecks in parallel and distributed systems, our proposed method and the baseline avoid these bottlenecks as well as fit into the map-reduce framework.

\vspace{0.05in}
\noindent
\textbf{Linear scan baseline.} Indexing schemes are generally not conducive to distributed data and distributed linear scan is a possible alternative. Let the dataset of size $n$ be evenly distributed over $m$ nodes with each node containing $(n/m)$ objects and let there be a master node that communicates with
each of these $m$ nodes. For single query, the master sends the query to each of the $m$ nodes in the map phase and the mappers on every node perform a linear scan in parallel. The best match from each of these nodes is returned to the master node which can then return the best among those $m$ returned matches in the reduce phase. Then the total time required to obtain the best match is $\left( c_1 m + c_2 \frac{n}{m} \right)$. It is important to note that each of the $m$ nodes containing the actual data performs $(n/m)$ kernel evaluations while the master node just sorts $m$ kernel values; therefore, $c_1 \ll c_2$. The minimum possible runtime achievable by linear scan in this distributed system is
\mynegvspace
\begin{equation*}
\mynegvspace
2\sqrt{c_1 c_2 n} \sim O(\sqrt{n}) \mbox{ with } m = \sqrt{c_2 n / c_1}.
\mynegvspace
\end{equation*}

\vspace{0.05in}
\noindent
\textbf{Multiple tree indices.} Let us consider this following scheme for indexing: for each of the $m$ nodes containing $(n/m)$ objects, build and save a cover tree on the data in each of the nodes. Similar to the linear scan case, for a single query, the master sends the query to each of the $m$ nodes in the map phase and the mappers on every node perform the branch-and-bound search on the cover trees in parallel. The best matches from each of $m$ trees are returned to the master node which then reports the best among them in the reduce phase. Since we have shown that a single query on a cover tree requires logarithmic time, let the time taken for this parallel search on multiple trees be $\left(c_1 m + c'_2 \log \frac{n}{m}\right),$ where $c'_2$ is the scaling constant independent of $n$ for the cover tree on this dataset. In this scenario, the minimum possible achievable runtime is
\mynegvspace
\begin{equation*}
c'_2 \log ( e\ n\ c_1 / c'_2) \sim O(\log n) \mbox{ with } m = c'_2 / c_1.
\mynegvspace
\end{equation*}

\noindent
\textbf{Remark.} The optimal values of $m$ can only be used if there are enough number of nodes available and if $(n / m)$ is small enough for the data in the node to fit in memory. However, searching over multiple tree indices is a simple yet efficient extension of our proposed method for distributed data. In the best case, our method can achieve $O(\log n)$ scaling while the distributed linear scan can only achieve $O(\sqrt{n})$ scaling. More non-trivial extensions using recent techniques \cite{cayton2012accelerating, lee2012distributed} can achieve further efficiency in the distributed setting.

\section{Conclusion} \label{sec:conclusion}
%
%
Our work appears to be the most comprehensive study of the general MKS problem to date, including the first rigorous characterization of its hardness and the first general-kernel method with provably logarithmic query time, sub-quadratic preprocessing time, and speed in practice. A tighter theoretical analysis without the bounded directional concentration constant assumption and an empirical investigation on more abstract objects like graphs and time series will be part of our future work. We wish to extend our search algorithm to more classes of similarity functions beyond Mercer kernels.
%

\newpage
\section*{Appendix}
\appendix
%
%
\section{Cover tree construction \\ in Hilbert space}
The details of the cover tree construction are provided in Algorithm \ref{alg:tree-construction}. For a cover tree built on dataset $S$, each node is only associated with a single point $p \in S$.  Therefore, because the only distance computations involve points in $S$, the explicit representation of the objects in $\hilbert$ is not required (by the kernel trick). 

Given a Mercer kernel $\K$ for a class of objects, the distances between points in $\hilbert$ required for the tree construction in $\hilbert$ can be evaluated using the distance metric $d_\K$ induced from the kernel.  Three kernel evaluations are required to compute the distance; however, if the self-kernel values $\K(x, x) \forall\ x \in S$ are precomputed and saved, $d_\K(x, y)$ can be evaluated with a single evaluation of $\K(x, y)$.

\begin{algorithm}[H]
\caption{\textbf{\underline{Construct}}$(p,\left\langle\mbox{\textsc{Near}, \textsc{Far}}\right\rangle, i)$ \cite{beygelzimer2006coverLonger}}
\begin{algorithmic}
{\small
\label{alg:tree-construction}
  \IF{$\mbox{\textsc{Near}}= \emptyset$}
    \STATE return $\{ p, \emptyset \}$.
  \ELSE
    \STATE $\{\mbox{\textsc{Self, Near}}\}$ = \\
\ \ \ \ \ \textbf{Construct}$(p, \mbox{\textbf{Split}}(d(p, \cdot), 2^{i-1}, \{\mbox{\textsc{Near}}\}), i - 1)$
    \STATE add \textsc{Self} to Children$(p)$
    \WHILE{$\mbox{\textsc{Near}}\not= \emptyset$}
      \STATE pick $q$ in \textsc{Near}
      \STATE $\{\mbox{\textsc{Child, Unused}}\}$ = \\
       \ \ \ \ \textbf{Construct}$(q, \mbox{\textbf{Split}}(d(q, \cdot), 2^{i-1}, \{\mbox{\textsc{Near, Far}}\}),$\\ \ \ \ \ \ \ \ \ \ \ \ \ $ i - 1)$
      \STATE add \textsc{Child} to Children$(p)$
      \STATE $\{\mbox{\textsc{New-near, New-far}}\}$ = \\
      \ \ \ \ \ \ \textbf{Split}$(d(p, \cdot), 2^i, \{\mbox{\textsc{Unused}}\})$
      \STATE $\textsc{Far}\leftarrow \textsc{Far}\cup\textsc{New-far}$
      \STATE $\textsc{Near}\leftarrow \textsc{Near}\cup\textsc{New-near}$
    \ENDWHILE
    \STATE return $\{ p, \textsc{Far} \}$.
  \ENDIF
}
\end{algorithmic}
\textbf{\underline{Split}}$(d(p, \cdot), r, \{S_1, S_2, \ldots\})$
\begin{algorithmic}
\STATE \textsc{Near} = $\bigcup_{i}\{q \in S_i \colon d(p, q) \leq r \}$
\STATE \textsc{Far} = $\bigcup_{i}\{q \in S_i \colon 2r > d(p, q) > r \}$
\STATE $\forall\ i, S_i \leftarrow S_i \setminus (\mbox{\textsc{Near}} \cup \mbox{\textsc{Far}})$.
\STATE return $\{ \mbox{\textsc{Near, Far}} \}.$
\end{algorithmic}
\end{algorithm}
The algorithm calls the recursive function \textbf{Construct}$(p, \left\langle\mbox{\textsc{Near}, \textsc{Far}}\right\rangle, i)$ where $p$ is a point, $\left\langle\mbox{\textsc{Near}, \textsc{Far}}\right\rangle$ are the point sets and $i$ is the current level of the tree. This recursive function calls a subroutine \textbf{Split}$(d(p, \cdot), r, \{S_1, S_2, \ldots\})$ with $d(p, \cdot)$ representing the set of distances of the points in the point sets $\{S_1, S_2, \ldots\}$ to the point $p$ and $r$ is the splitting distance to form the \textsc{Near} and \textsc{Far} sets.
\section{Proof of Theorem 4.2}
We restate the theorem for completeness:
\begin{theorem}
Consider a kernel $\K$ such that $\K(x,x) = 1 \forall\ x \in S$. Given a cover tree node rooted at an object $p$ at level $i$ in $\hilbert$ and a (query) object $q$, the maximum kernel function value between $q$ and any object in the set $S_p^i$ is bounded as:
\begin{equation} \label{eq:mko-ball-bnd-tight2}
\K(q, S_p^i) \leq \begin{cases}
\begin{split} &\K(q, p)(1 - 2^{2i+1}) \\ & \quad + 2^{i+1}\sqrt{\left(1 - \K(q, p)^2 \right)\left(1 - 2^{2i}\right)}, \\ &\quad \mbox{ if } \K(q, p) \leq 1 - 2^{2i + 1} \end{split} \\
1.0, \mbox{ otherwise}
\end{cases}
\end{equation}
\end{theorem}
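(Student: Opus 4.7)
The plan is to mirror the structure of the proof of Theorem 4.1 but exploit the fact that for a normalized kernel every feature vector lies on the unit sphere in $\hilbert$, so the uncontrolled Cauchy--Schwarz slack in \eqref{eq:first-mko-ineq} can be tightened by a geometric optimization on that sphere. Let $p^\ast$ be the maximizer of $\K(q,r)$ over $r \in S_p^i$ and let $\Delta = d_\K(p, p^\ast)$. Exactly as in Theorem 4.1, the recursive application of the covering invariant together with the triangle inequality for $d_\K$ gives $\Delta \leq 2^{i+1}$. Because $\K$ is normalized, $\norm{\varphi(p)}_\hilbert = \norm{\varphi(p^\ast)}_\hilbert = \norm{\varphi(q)}_\hilbert = 1$, and the identity $d_\K(p,p^\ast)^2 = 2 - 2\K(p,p^\ast)$ yields the key constraint
\[
\K(p,p^\ast) \;\geq\; 1 - 2^{2i+1}.
\]

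Next, I would decompose $\varphi(p^\ast)$ into the component along $\varphi(p)$ and a component orthogonal to it. Writing $\beta = \K(p,p^\ast) \in [1-2^{2i+1},\,1]$, there is a unit vector $\vec{w} \perp \varphi(p)$ such that $\varphi(p^\ast) = \beta\, \varphi(p) + \sqrt{1-\beta^2}\, \vec{w}$. Taking the inner product with $\varphi(q)$, decomposing $\varphi(q)$ in the same way, and applying Cauchy--Schwarz to the orthogonal part (whose norm equals $\sqrt{1 - \K(q,p)^2}$) gives the clean intermediate bound
\[
\K(q,p^\ast) \;\leq\; \K(q,p)\,\beta + \sqrt{\bigl(1 - \K(q,p)^2\bigr)\bigl(1-\beta^2\bigr)}.
\]

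The remaining step is a one-variable optimization of $f(\beta) = \alpha\beta + \sqrt{(1-\alpha^2)(1-\beta^2)}$, with $\alpha = \K(q,p)$, over the feasible interval $\beta \in [1-2^{2i+1},\,1]$. The function $f$ is concave on $[-1,1]$ (the linear piece plus a concave square root), and differentiating and solving $f'(\beta)=0$ shows the unconstrained maximizer is $\beta^\star = \alpha$, at which $f(\alpha)=1$. I then split into two cases: if $\alpha > 1 - 2^{2i+1}$, the unconstrained maximizer is feasible, so the best available bound is the trivial $\K(q,p^\ast)\leq 1$. If $\alpha \leq 1 - 2^{2i+1}$, then $\beta^\star$ lies to the left of the feasible interval, and by concavity $f$ is decreasing on that interval, so the maximum is achieved at the left endpoint $\beta = 1 - 2^{2i+1}$.

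The only arithmetic left is to plug $\beta = 1 - 2^{2i+1}$ into $f$ and use the identity $1-(1-2^{2i+1})^2 = 2^{2i+2}(1-2^{2i})$ to extract the $2^{i+1}$ factor and recover the displayed bound \eqref{eq:mko-ball-bnd-tight2}. The main obstacle, and the place that distinguishes this proof from Theorem 4.1, is the constrained optimization: one must correctly identify that the binding constraint is $\K(p,p^\ast) = 1-2^{2i+1}$ (i.e., $p^\ast$ sits on the boundary of the spherical cap around $\varphi(p)$ and is tilted as far toward $\varphi(q)$ as that cap allows), and argue via concavity that no interior $\beta$ can do better than the endpoint in the nontrivial regime. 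Everything else is a direct adaptation of the Theorem 4.1 argument combined with elementary algebra.
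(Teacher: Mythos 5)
Your proof is correct and follows the same geometric skeleton as the paper's: both derive the constraint $\K(p,p^\ast) \geq 1 - 2^{2i+1}$ from the covering invariant via $d_\K(p,p^\ast)^2 = 2 - 2\K(p,p^\ast)$, both reduce the problem to bounding $\alpha\beta + \sqrt{(1-\alpha^2)(1-\beta^2)}$ with $\alpha = \K(q,p)$ and $\beta = \K(p,p^\ast)$ (which is exactly the paper's $\cos(\theta_{qp} - \theta_{pp^\ast})$), and both split cases according to whether $\alpha$ exceeds $1 - 2^{2i+1}$, finishing with the identity $1-(1-2^{2i+1})^2 = 2^{2i+2}(1-2^{2i})$. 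The one substantive difference is how the key intermediate inequality is justified: the paper asserts $\cos\theta_{qp^\ast} \leq \cos(\{\theta_{qp} - \theta_{pp^\ast}\}_+)$ as a triangle inequality for angles on the sphere without proof, whereas you derive the equivalent bound from first principles by decomposing $\varphi(p^\ast)$ and $\varphi(q)$ into components parallel and orthogonal to $\varphi(p)$ and applying Cauchy--Schwarz only to the orthogonal parts. Similarly, your endgame is a concave one-variable optimization pinned at the left endpoint of $[1-2^{2i+1},\,1]$, where the paper invokes monotonicity of $\cos(\{\theta_{qp}-\cdot\}_+)$; these are the same argument in different clothing. The net effect is that your version is slightly more self-contained and makes explicit where the slack in the general Theorem 4.1 bound is being recovered (the orthogonal component of $\varphi(q)$ has norm $\sqrt{1-\K(q,p)^2}$ rather than $1$), at the cost of a little more bookkeeping than the paper's trigonometric shortcut.
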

\begin{proof}
Since all the points are sitting on the surface of a hypersphere in $\hilbert$, $\K(q, p)$ denotes the cosine of the angle made by $\varphi(q)$ and $\varphi(p)$ at the origin.  If we first consider the case where $q$ lies within the ball bounding cover tree node $p$ at level $i$ (that is, if $d_K(q, p) < 2^{i + 1}$), it is clear that the maximum possible kernel evaluation should be 1, because there could exist a point in $S_p^i$ whose angle to $q$ is 0.  We can easily modify this condition to an easier condition on $\K(q, p)$, which is $\K(q, p) < 1 - 2^{2i + 1}$.

Now, for the other case, let $\cos \theta_{qp} = \K(q, p)$ and $p^* = \arg \max_{r \in S_p^i} \K(q, r).$ Let $\theta_{pp^*}$ be the angle between $\varphi(p)$ and $\varphi(p^*)$ and $\theta_{qp^*}$ be the angle between $\varphi(q)$ and $\varphi(p^*)$ at the origin. Then 
\begin{equation} \label{eq:mko-cos-bnd1}
\K(q, p^*)  =  \cos \theta_{qp^*} \leq \cos (\{\theta_{qp} - \theta_{pp^*}\}_+).
\end{equation}
Now we know that $d_\K(p, p^*) \leq 2^{i+1}$ and that $d_\K(p, p^*) = \sqrt{2 - 2 \cos \theta_{pp^*}}.$ Hence $\cos \theta_{pp^*} \geq 1 - 2^{2i+1},$ and $\theta_{pp^*} \leq |\cos^{-1} (1 - 2^{2i+1})|$ and take $\theta = |\cos^{-1} (1 - 2^{2i + 1})|$. Combining this with equation \eqref{eq:mko-cos-bnd1}, we get:
\begin{equation*} \label{eq:mko-cos-bnd2}
\K(q, S_p^i) \leq \cos (\{\theta_{qp} - \theta_{pp^*}\}_+) \leq \cos (\{\theta_{qp} - \theta\}_+).
\end{equation*}
Substituting the value of $\theta$ above and simplifying gives us the statement of the theorem.
\end{proof}
\section{Proof of Theorem 4.3}
We will restate the theorem here: 
\begin{theorem}
If $T$ is a cover tree on $S$, then \textbf{FastMKS$(T, q)$} returns $\arg \max_{r\in S} \K(q, r)$.
\end{theorem}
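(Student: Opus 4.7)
The plan is to establish, by downward induction on the level $i$, the invariant
\[
\K(q, R_{i-1}) \;\geq\; \K(q, S) \;-\; 2^{i}\sqrt{\K(q,q)},
\]
and then pass to the limit $i \to -\infty$. To drive the induction, I would track a specific witness: let $s^\star = \arg\max_{r \in S}\K(q,r)$, and for each level $j$ let $p_j \in C_j$ be the (implicit) ancestor of $s^\star$ at level $j$. The nesting invariant $C_i \subset C_{i-1}$ and the covering tree invariant together guarantee that such an ancestor chain exists and that consecutive ancestors are linked by a parent-child edge in the explicit-or-implicit tree, with $d_\K(p_{i-1}, p_i) \leq 2^i$. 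The base case is level $+\infty$: the root is in $R_\infty$, and by Theorem 4.1 its subtree (all of $S$) satisfies $\K(q, S) \leq \K(q, \mathrm{root}) + 2^{i+1}\sqrt{\K(q,q)}$ for the relevant $i$, so the invariant trivially holds above the root's level.

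For the inductive step, assume $p_i \in R_i$. Then, by construction, $R$ contains every child of $p_i$, in particular $p_{i-1}$. Applying Theorem 4.1 to the node $p_{i-1}$ at level $i-1$ and using that $s^\star \in S_{p_{i-1}}^{i-1}$ (because $p_{i-1}$ is its ancestor), I get
\[
\K(q, p_{i-1}) \;\geq\; \K(q, s^\star) \;-\; 2^{i}\sqrt{\K(q,q)} \;=\; \K(q,S) \;-\; 2^{i}\sqrt{\K(q,q)}.
\]
Since trivially $\K(q,R) \leq \K(q,S)$, this already exceeds the pruning threshold $\K(q, R) - 2^{i}\sqrt{\K(q,q)}$, so $p_{i-1}$ survives the filter and lies in $R_{i-1}$. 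This preserves the invariant at the next level and also carries the induction hypothesis forward.

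To conclude, I would argue that in the explicit representation there is a finite level $i^\star$ below which every explicit $R_{i-1}$ equals $R_{-\infty}$, namely the level at which all surviving nodes are at the bottom of their explicit chains. (Equivalently, one can just take the limit $i \to -\infty$ in the invariant: $2^{i}\sqrt{\K(q,q)} \to 0$, forcing $\K(q, R_{-\infty}) \geq \K(q, S)$, while $R_{-\infty} \subseteq S$ gives the reverse inequality, so $\K(q, R_{-\infty}) = \K(q, S)$.) Since the algorithm returns $\arg\max_{r \in R_{-\infty}} \K(q, r)$, this matches $\K(q, s^\star)$, proving correctness.

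\textbf{Main obstacle.} The delicate point is the interaction between implicit and explicit representations of the cover tree: the ancestor $p_{i-1}$ of $s^\star$ might be an implicit self-child, so one has to be careful that the algorithm's \texttt{Children}$(\cdot)$ call at level $i$ effectively traverses these self-children when a point is retained. I would handle this by noting that an implicit self-child contributes the same point to $R$, hence to $R_{i-1}$, so the same $\K(q, p_{i-1})$ value keeps the witness alive down to the level where it becomes an explicit leaf. Everything else (Theorem 4.1, the Cauchy--Schwarz bound, the cover tree invariants) is ready to use from earlier in the paper.
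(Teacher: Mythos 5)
Your proposal is correct and follows essentially the same route as the paper's own proof: track the optimal point $p^*$ (via its ancestor chain) down the levels, use Theorem 4.1 to show that ancestor always survives the pruning threshold, derive the invariant $\K(q,R_{i-1}) \geq \K(q,S) - 2^i\sqrt{\K(q,q)}$, and let $i \to -\infty$. Your treatment is if anything slightly more careful than the paper's (the explicit ancestor chain and the remark about implicit self-children make precise what the paper compresses into ``$p^*$ is the grandchild of some point in $R_{i-1}$'').
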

\mynegvspace
\begin{proof}
Let $p^* = \arg \max_{r \in S} \K(q, r)$ and $\kappa^* = \K(q, p^*)$. At the beginning of the iteration at level $\infty$, $p^*$ is in consideration since $p^*\in S$ and for $p \in C_\infty$, $S_p^\infty = S$. At the end of the iteration at level $i$, all points $p \in S$ such $\K(q, p) \geq \K(q, R_{i-1}) - 2^i \sqrt{\K(q,q)}$ are still in consideration.  Since $\kappa^* \geq \K(q, R_{i-1})$, $p^*$ is still in consideration. Either $p^* \in R_{i-1}$ or $p^*$ is the grandchild of some point $p \in R_{i - 1}$. Hence, by theorem 4.1, $\K(q, R_{i -1}) \geq \kappa^* - 2^i\sqrt{\K(q,q)} ~\forall\ i$. Now $\lim_{i\to -\infty} \K(q, R_{i-1}) \geq \lim_{i\to -\infty} \left(\kappa^* - 2^i \sqrt{\K(q,q)} \right) = \kappa^*$ (assuming $\K(q,q) < \infty$). Hence $\K(q, R_{-\infty}) = \kappa^*$ and \textbf{FastMKS$(T,q)$} returns $\arg\max_{r\in R_{-\infty}} \K(q,r) = p^*$.
\end{proof}
\section{Rank-approximate \\max-kernel search}
The rank-approximation of the max-kernel search is defined as follows: for a given set $S$ of $n$ objects (the reference set), a (Mercer) kernel function $\K(\cdot, \cdot)$, and a query $q$, find the object $p \in S$ such that $$\left | \{r \in S \colon \K(q, r) > \K(q, p) \} \right| \leq \tau.$$
As mentioned in Ram et.al, 2009 \cite{ram2009rank}, the idea is to draw enough samples $S'$ from the tree such that $$\Pr\left(|\{r \in S \colon \K(q,r) > \K(q, S') \}| < \tau\right) \geq 1 - \delta.$$
Simplifying the formulation presented in \cite{ram2009rank}, the probability of always missing the top $\tau$ values for a given query $q$ after $k$ samples with replacement is given by $\left(1 - \frac{\tau}{n} \right)^k.$ If we want a $(1 - \delta)$ success rate of sampling, then we want $k$ to be such that 
$$\left(1 - \frac{\tau}{n} \right)^k < \delta, \mbox{\ \ \ and \ \ \ }  \left(1 - \frac{\tau}{n} \right)^{k-1} > \delta,$$
giving $k = \left\lceil \frac{\log \delta}{\log \left(1 - \frac{\tau}{n}\right)} \right\rceil.$ Given that $k$ samples (as defined above) is to be made from the tree, the stratified sampling on a tree is presented in Algorithm \ref{alg:single-ra-fmko}. We assume that at each node of the tree, we have access to the number of points in the subtree $S_r^i$ rooted at node $r$ at level $i$ for every node in the tree. This algorithm returns a $\tau$-rank approximate solution to the max-kernel operation with probability $(1 - \delta)$.
\begin{algorithm}[H]
   \caption{\textbf{\underline{RAFastMKS}(Tree $T$, query $q$, rank error $\tau$, failure probability $\delta$)}}
\begin{algorithmic}{\small
\label{alg:single-ra-fmko}
   \STATE \textbf{Initialize} $R_\infty = C_\infty$ \hfill \textit{// the root of the tree}
   \STATE \textbf{Set} $k = \left\lceil \frac{\log \delta}{\log \left(1 - \frac{\tau}{n}\right)} \right\rceil$ \hfill \textit{// the number of samples}
   \FOR{$i=\infty$ {\bfseries to} $-\infty$}
    \STATE $R = \{ Children(r)\colon r\in R_i\}$ \hfill \textit{// tree descend}
    \STATE $R' = \{r\in R\colon \K(q,r) \geq K(q, R) - 2^i\sqrt{\K(q, q)} \}$
    \STATE $R_{i-1} = \{r \in R' \colon |S_r^i| > \frac{n}{k} \}$ \textit{// approximate by sampling}
    \STATE $R_{-\infty} = R_{-\infty} \cup \{R' \setminus R_{i-1} \}.$
   \ENDFOR
   \STATE \textbf{return} $\arg\max\limits_{r\in R_{-\infty}} \K(q,r)$ \hfill \textit{// the leaf nodes.}
}\end{algorithmic}
\end{algorithm} 

\section{Implementation details} 

We use the following performance-improving optimizations on the cover tree: 
\begin{citemize}
\item[(1)] Instead of the upperbound of $2^{i+1}$ for the distance of a cover tree node $p$ at level $i$ to its furthest descendant, the actual distance to the furthest descendant is cached at tree construction time and used at search time, 
\item[(2)] The distance to the parent is cached to avoid evaluating $\K$ for the child nodes where an improvement is impossible,
\item[(3)] Instead of the base $2$, an experimentally verified base of $1.3$ is used for best results \cite{beygelzimer2006coverLonger}, 
\item[(4)] $\K(x,x) \forall x\in S$ is precomputed and stored for future use to reduce the number of kernel evaluations required for a single distance computation $d_\K(x,y)$ to one (just $\K(x,y)$ since $\K(x,x)$ and $\K(y,y)$ are precomputed), 
\item[(5)] We use the tighter bound in Theorem 4.2 for normalized kernels.
\end{citemize}

{\footnotesize
\bibliography{../../pram_bib_col,../../nns,../../nbd,../../ml,../../manifold,../../de,../../applications}
\bibliographystyle{unsrt}
}
%
\end{document}